\newcommand{\defined}{\vcentcolon =}
\newcommand{\E}{\mathbb E}
\newcommand{\Var}{\operatorname{Var}}
\newcommand{\R}{\mathbb R}
\newcommand{\M}{\mathcal M}
\newcommand{\N}{\mathbb N}
\newcommand{\sr}[1]{\stackrel{#1}}
\newcommand{\set}[1]{\left\{#1\right\}}
\newcommand{\ind}[1]{\llbracket #1 \rrbracket}
\newcommand{\absolute}{\ll}
\newcommand{\eqn}[1]{\begin{align}#1\end{align}}
\newcommand{\eq}[1]{\begin{align*}#1\end{align*}}
\def\subsubsect#1{\vspace{1ex plus 0.5ex minus 0.5ex}\noindent{\bf\boldmath{#1.}}}
\theoremstyle{plain}
\newtheorem{theorem}{Theorem}
\newtheorem{lemma}[theorem]{Lemma}
\theoremstyle{definition}
\theoremstyle{remark}
\newcommand{\origin}{\theta_{\!\circ}}
\newcommand{\bi}{\bar J}
\newcommand{\cesaro}{Ces\`aro}
\newcommand{\spec}[1]{\Vert #1 \Vert_2}
\newcommand{\norm}[1]{\left\vert #1 \right\vert}
\newcommand{\normal}{\mathcal N}
\renewcommand{\epsilon}{\varepsilon}
\newcommand{\T}{\top}
\newcommand{\F}{\mathcal F}
\newcommand{\iid}{i.i.d.}
\title{Asymptotics of Continuous Bayes for Non-i.i.d.\ Sources}
\author{Tor Lattimore$^1$ and Marcus Hutter$^2$ \\[0.5cm] $^1$ Department of Computing Science, University of Alberta \\ $^2$ Research School of Computer Science, Australian National University}
\date{}
\begin{document}

\maketitle

\begin{abstract}
Clarke and Barron analysed the relative entropy between an \iid{} source and a Bayesian mixture over a continuous class
containing that source. In this paper a comparable result is obtained when the source is permitted to be both non-stationary
and dependent. The main theorem shows that Bayesian methods perform well for both compression and sequence prediction
even in this most general setting with only mild technical assumptions. 
\end{abstract}

\section{Introduction}
We continue the work of Clarke and Barron \cite{BC90} bounding the relative entropy between
probability measures on infinite sequences and a Bayesian mixture over measures in some continuous class $\M$.
Small values of relative entropy have a number of implications. Notably that prediction and compression using the Bayesian mixture
is nearly as good as using the true unknown source. The contribution of this work is to show that Bayesian
methods generalise to the case where data is not sampled identically and independently, a situation that is 
often encountered in practical problems.   
One application is online compression where a sequence of words is observed and should be compressed. If
we assume the words are sampled from some probability distribution, which may be non-stationary and dependent, then 
near-optimal compression is obtained by arithmetic
coding with respect to that distribution. Typically, however, the probability distribution from which the data is sampled 
is unknown. One approach in this case is to code with respect to
the Bayesian mixture over some set of measures believed to contain the truth. Bounding the relative entropy
between the truth and the Bayesian mixture is then equivalent to bounding the compression redundancy due to coding
with respect to the wrong measure.

Another application is discriminative learning where a classifier should predict label data based on observations.
Sometimes it is possible to model the joint distribution of the observations and labels together, but in many
cases the observations are relatively unordered and modelling the conditional distribution of the label given
the observation is easier. Our results show that Bayesian methods for discriminative learning perform well
under only mild technical assumptions.

Suppose $\M = \set{P_\theta : \theta \in \Theta \subseteq \R^d}$ is a set of measures on the space of infinite 
sequences over alphabet $\Omega$. We let $M$ be a Bayesian mixture over $\M$ with respect to some prior and
analyse the relative entropy $D(P_\theta^n\Vert M^n)$ where $P_\theta^n$ and $M^n$ are the distributions on the 
first $n$ observations induced by
$P_\theta$ and $M^n$ respectively. Unlike in \cite{BC90} we permit $P_\theta$ to be non-stationary and dependent, so
$P^n_\theta$ is typically not a product measure.

Our main contribution is a proof under mild technical assumptions that the relative entropy can be bounded
by
\eq{
D(P_{\theta}^n \Vert M^n) \leq \ln{1 \over w(\theta)} + {d \over 2} \ln{n \over 2\pi} + {1 \over 2}{\ln \det \bi_n(\theta)} + o(1)
}
where $w(\theta)$ is the prior density of parameter $\theta$ and $\bi_n(\theta)$ is the mean Fisher information matrix at $\theta$.
If $P_\theta^n$ is a product measure, then $\bi_n(\theta)$ coincides with the usual Fisher information matrix and the result above
is the same as that given in \cite{BC90}.

The main difficulty in generalising the proof in \cite{BC90} to non-stationary dependent sources is
the fact that the mean Fisher information matrix $\bi_n(\theta)$ is now dependent on $n$.
We start with some notation (Section \ref{sec:not}). The main theorems are 
presented (Section \ref{sec:thm}) followed by a variety of applications and
discussion (Section \ref{sec:apps}). The proofs are found in Sections \ref{sec:proof} and \ref{sec:proof-weak}. 
We discuss some of the assumptions and special cases in Sections \ref{sec:equi}-\ref{sec:zero}, including the case
when the information matrix vanishes and a comparison to the known results in zero-dimensional (countable) families.
For discussion and conclusions see Section \ref{sec:conc}. 

\section{Notation}\label{sec:not}

We use $\ln$ for the natural logarithm and $A^{\T}$ for the transpose of matrix $A$. Suppose $A \in \R^{d \times d}$ and $x \in \R^d$.
Then $\norm{x}_2^2 \defined x^{\T} x$ is the standard $2$-norm and $\norm{x}_A^2 \defined x^{\T} A x$ is the norm with respect to $A$. Note
that $\norm{\cdot}_A$ is a norm only if $A$ is positive definite, but occasionally we abuse notation by writing $\norm{\cdot}_A$ even if
$A$ is not positive definite.
We use $\spec{A}$
for the spectral norm of $A$, which for positive definite matrices is the largest eigenvalue of $A$. It is easy to check that
$\norm{x}_2^2 \leq \norm{x}_A^2  \spec{A^{-1}}$. The determinant of $A$ is $\det A$.
The indicator function $\ind{expr}$ is equal to $1$ if $expr$ is true and $0$ otherwise. For function $f:\R^n \to \R$ we write $\partial_i f$
for the partial derivative of $f$ with respect to the $i$th coordinate. Higher order derivatives are denoted 
by $\partial_{i,j} f$ or $\partial_{ i_1 \cdots  i_k}f$. The Gamma function is $\Gamma(x) = \int^\infty_0 t^{x-1} e^{-t} dt$.

\subsubsect{Probability spaces}
Let $(\Omega, \F)$ be a measurable space. Then the product measure space is $(\Omega^n, \F^n)$ where $\F^n$ is the $\sigma$-algebra
generated by the $n$-fold tensor product of $\F$. Let $\F^\infty$ be the cylinder $\sigma$-algebra defined by
\eq{
\F^\infty \defined \sigma\set{\bigcup_{n=1}^\infty \F^n \otimes \set{\Omega^\infty}}.
}
Then $(\Omega^\infty, \F^\infty)$ is a measurable space.
A probability measure $P$ on this space may be thought of as a family of probability measures where $P^n: \F^n \to [0,1]$
is induced by restriction $P^n(A) = P(A \times \Omega^\infty)$.
We think of $P$ as a probability measure on infinite sequences in $\Omega^\infty$ and $P^n$ to be the probability measure
on the first $n$ observations induced by $P$. If $\omega \in \Omega^\infty$, then
$\omega_{1:n} = \omega_1\omega_2 \cdots \omega_n \in \Omega^n$ is the projection of the first $n$ components and 
$\omega_{<n} \defined \omega_{1:n-1}$.

\subsubsect{Parameterised families and Bayes}
Suppose $\Theta \subseteq \R^d$ and that $P_\theta$ is a family of probability measures parameterised by $\theta \in \Theta$.
We assume there exists a measure $\nu$ such that $\nu^n$ is $\sigma$-finite for all $n$ and
the density of $P^n_\theta$ with respect to $\nu^n$ exists for all $n$ and $\theta \in \Theta$, which is denoted by $p^n_\theta$.
Then by the definition of the density we have
\eq{
(\forall A \in \F^n) \qquad P^n_\theta(A) = \int_A p^n_\theta(\omega) d\nu^n(\omega).
}
We denote expectations with respect to $P^n_\theta$ by $\E_\theta$ where $n$
is always understood from the context. Variances are denoted by $\Var_{\theta}$.
The $n$-step mean Fisher information matrix at $\theta \in \Theta$ is denoted by $\bi_n(\theta) \in \R^{d\times d}$ and defined
\eq{
\bi_n(\theta)_{i,j} 
\defined -{1 \over n}\E_{\theta} \left[\partial_{i,j} \ln p_\theta^n\right],
}
which like all other quantities of interest is independent of the reference measure used to define the density.
There are other definitions of the Fisher information matrix, all of which coincide under the weak assumption
that derivatives and expectations can be exchanged \cite[\S18]{Gru07}. We are not aware of any interesting
family of measures for which this assumption is not satisfied.
Let $w:\Theta \to [0,\infty)$ be a prior probability density with respect to the Lebesgue measure, which satisfies
$\int_\Theta w(\theta) d\theta = 1$. Then
the density of the Bayesian mixture $m^n: \Omega^n \to [0,\infty)$ is defined by
\eq{
m^n(\omega_{1:n}) \defined \int_{\Theta} w(\theta) p^n_\theta(\omega_{1:n}) d\theta.
}
The Bayes mixture measure $M^n:\F^n \to [0,1]$ may equivalently be defined by
\eq{
M^n(A) &= \int_A m^n(\omega) d\nu(\omega) 
\equiv \int_A \int_\Theta w(\theta) p^n_\theta(\omega) d\theta d\nu(\omega) 
= \int_\Theta w(\theta) P^n_\theta(A) d\theta.
}
If $w$ and $p^n_\theta$ are continuous at $\theta$, then $P^n_\theta$ is absolutely continuous with respect to $M^n$ and
the relative entropy between $P^n_\theta$ and $M^n$ is defined by
\eq{
D(P^n_\theta \Vert M^n) \defined \E_{\theta} \ln {p^n_\theta \over m^n}.
}
We do not claim that $P_\theta \absolute M$ and indeed this is not generally the case ($\M \equiv \text{Bernoulli measures}$).
The $d$-dimensional (non-degenerate) normal distribution has density
\eq{
\normal(\theta|\mu, \Sigma) = {1 \over \sqrt{(2\pi)^d \det \Sigma}} \exp\left(-{1 \over 2} \norm{x - \mu}_{\Sigma^{-1}}^2\right)
}
where $\mu \in \R^d$ is the mean and $\Sigma$ is the positive definite covariance matrix. 

\section{Main Theorems}\label{sec:thm}

Our main result generalises Theorem 2.3 in \cite{BC90}. The conditions (a), (b) and (d) below are standard regularity conditions 
also made in \cite{BC90}. The conditions (c) and (e)
are regularity conditions on the mean Fisher information matrix, which in this work depends on $n$ as well as $\theta$.
The second result relaxes condition (e) at the cost of a slightly worse bound.

\begin{theorem}\label{thm:main}
Let $\origin \in \Theta$ and $\bi_n \equiv \bi_n(\origin)$. Define a family of functions $\set{f_n}: \Theta \to \R$ by
\eq{
f_n(\theta) \defined {1 \over n} D(P^n_{\origin}\Vert P_\theta^n).
}
Assume that:
\begin{enumerate}[(a)]
\item $w$ is continuous at $\origin$.
\item $f_n(\theta)$ is twice differentiable at $\origin$.
\item Entries in the Hessian of $f_n$ are equicontinuous at $\origin$.
\eq{
\lim_{\theta\to\origin}\;\sup_n \norm{\partial_{i,j}f_n(\theta) - \partial_{i,j}f_n(\origin)} = 0.
}
\item ${\partial_i}f_n(\origin)= 0$ for all $i$. 
\item $\limsup_{n\to\infty} \spec{\bi_n^{-1}} < \infty$.
\end{enumerate}
If $D_n \equiv D(P^n_{\origin} \Vert M^n)$, then
\eq{
\limsup_{n\to\infty} \left(D_n - {d \over 2} \ln{n \over 2\pi} - {1 \over 2} \ln \det \bi_n\right) \leq \ln{1 \over w(\origin)}.
}
\end{theorem}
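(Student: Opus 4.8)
The plan is to follow the Laplace-approximation strategy of Clarke and Barron, but to carefully track the $n$-dependence of $\bi_n$ throughout. First I would write
\eq{
D_n = \E_{\origin} \ln {p^n_{\origin} \over m^n} = \E_{\origin} \ln {p^n_{\origin} \over p^n_{\origin}} - \E_{\origin}\ln{m^n \over p^n_{\origin}} = -\E_{\origin}\ln {m^n \over p^n_{\origin}},
}
and then lower-bound $m^n/p^n_{\origin}$ by restricting the defining integral to a small Euclidean ball $B_n = \set{\theta : \norm{\theta - \origin}_{\bi_n} \leq \rho_n}$ whose radius $\rho_n$ shrinks slowly. On that ball, using (d) to kill the linear term and a second-order Taylor expansion of $f_n$, we have $\ln(p^n_\theta / p^n_{\origin}) = -n f_n(\theta) + (\text{martingale-type fluctuation})$, and $n f_n(\theta) \approx {n \over 2}\norm{\theta - \origin}^2_{\hess f_n(\origin)}$. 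The key point is that $\hess f_n(\origin) = \bi_n$ (this is essentially the definition of $f_n$ combined with the Fisher-information identity, exchanging derivatives and expectation); so the quadratic form in the exponent is exactly $-{n \over 2}\norm{\theta - \origin}^2_{\bi_n}$, and the Gaussian integral over $B_n$ produces the $-{d \over 2}\ln{n \over 2\pi} - {1\over 2}\ln\det\bi_n$ terms, while continuity of $w$ at $\origin$ (condition (a)) contributes $\ln w(\origin)$.

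The execution splits into three pieces. \textbf{(i)} Equicontinuity of the Hessian entries (condition (c)) makes the Taylor remainder uniform in $n$: on $B_n$ we get $\norm{n f_n(\theta) - {n \over 2}\norm{\theta-\origin}^2_{\bi_n}} \leq {n \over 2}\norm{\theta - \origin}_2^2 \cdot \epsilon_n$ where $\epsilon_n \to 0$, and condition (e) ($\limsup \spec{\bi_n^{-1}} < \infty$) converts the $\bi_n$-ball into a genuine Euclidean ball of radius $\asymp \rho_n \sqrt{\spec{\bi_n^{-1}}}$, so the remainder is $o(n\rho_n^2) = o(\rho_n^2 / \text{(typical scale)})$; choosing $\rho_n \to 0$ slowly enough that $\rho_n^2 n \to \infty$ but $\epsilon_n \cdot n \rho_n^2 \to 0$ and the tail $e^{-\rho_n^2/2}$-type corrections vanish is the delicate bookkeeping. \textbf{(ii)} The stochastic fluctuation term: after centering, $\ln(p^n_\theta/p^n_{\origin}) + n f_n(\theta)$ has zero $\E_{\origin}$-expectation, and one needs it to be uniformly small (in probability, or after taking $\E_{\origin}$ of the log of the ball integral) over $\theta \in B_n$ — this is where a bound on the variance or a maximal inequality is needed, and it is the step most sensitive to the non-i.i.d.\ structure since $p^n_\theta$ is not a product. \textbf{(iii)} Assembling: $-\E_{\origin}\ln(m^n/p^n_{\origin}) \leq -\E_{\origin}\ln\bigl(\int_{B_n} w(\theta)\, e^{-n f_n(\theta) + (\text{fluct})}\, d\theta\bigr)$, and by Jensen / the concavity of $\ln$ together with the Gaussian integral asymptotics this is at most $\ln{1 \over w(\origin)} + {d \over 2}\ln{n \over 2\pi} + {1 \over 2}\ln\det\bi_n + o(1)$, giving the claimed $\limsup$.

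I expect the main obstacle to be piece \textbf{(i)}–\textbf{(ii)} interacting: making every error term ($o(1)$ after subtracting the three explicit terms) genuinely uniform in $n$ despite $\bi_n$ moving around, using only the equicontinuity hypothesis (c) rather than pointwise differentiability. Concretely, one must show $\sup_{\theta \in B_n}\norm{n f_n(\theta) - {n\over 2}\norm{\theta-\origin}_{\bi_n}^2} \to 0$, and simultaneously control the fluctuation term $\E_\theta[\ln p^n_\theta] - \E_{\origin}[\ln p^n_\theta]$-type quantities uniformly on $B_n$; condition (c) is tailored exactly so the first of these works, while the second requires an additional argument (a truncation of the likelihood ratio, or a second-moment estimate) that replaces the i.i.d.\ law-of-large-numbers argument used in \cite{BC90}. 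The degenerate-looking scaling $\det\bi_n$ is harmless because (e) bounds $\bi_n$ away from singularity; the real work is the uniform-in-$n$ Laplace estimate.
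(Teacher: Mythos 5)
Your overall architecture --- restrict the mixture integral to a neighbourhood of $\origin$ shrinking with $n$, Taylor-expand $f_n$ with a remainder made uniform in $n$ by the equicontinuity hypothesis (c), identify the Hessian of $f_n$ at $\origin$ with $\bi_n$, and compare with a Gaussian integral whose normalisation yields the ${d \over 2}\ln{n \over 2\pi} + {1 \over 2}\ln\det\bi_n$ terms --- matches the paper's. The genuine gap is your piece \textbf{(ii)}. You propose to control the fluctuation $\ln(p^n_\theta/p^n_{\origin}) + n f_n(\theta)$ uniformly over the ball via ``a bound on the variance or a maximal inequality'' (later: ``a truncation of the likelihood ratio, or a second-moment estimate''). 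No such estimate is available under (a)--(e): nothing in the hypotheses controls $\Var_{\origin}\ln(p^n_{\origin}/p^n_\theta)$, and the paper's Theorem \ref{thm:no-expect} --- the pointwise-redundancy version of this result --- has to add a separate assumption (f) (differentiability and equicontinuity of exactly that variance) in order to carry out the estimate you describe. More importantly, for the \emph{expected} redundancy the step is unnecessary. Apply Jensen's inequality to $-\ln\int_{R_n} w(\theta)p^n_\theta/(w(\origin)p^n_{\origin})\,d\theta$ with respect to the probability density $\phi_n(\theta) \propto \exp(-nf_n(\theta))$ on $R_n$, built from the \emph{exact} $f_n$ rather than its quadratic approximation. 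The resulting upper bound contains the term $\int_{R_n}\phi_n(\theta)\left(\ln(p^n_{\origin}/p^n_\theta) - nf_n(\theta)\right)d\theta$, and because $\phi_n$ is deterministic, Fubini together with the identity $\E_{\origin}\ln(p^n_{\origin}/p^n_\theta) = nf_n(\theta)$ makes its $\E_{\origin}$-expectation exactly zero. This is the one place the paper deliberately departs from Clarke and Barron (who substitute the Taylor quadratic into $\phi_n$), and it is what lets Theorem \ref{thm:main} go through with no probabilistic control of the likelihood ratio whatsoever: the Taylor expansion and conditions (c)--(e) are then only needed to lower-bound the deterministic normaliser $c_n = \int_{R_n}\exp(-nf_n(\theta))\,d\theta$. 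As written, your plan either relies on an estimate that cannot be derived from the stated hypotheses, or silently assumes the observation that makes the fluctuation term vanish.

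A secondary point on your piece \textbf{(i)}: you take the radius so that $n\rho_n^2 \to \infty$, which forces you to verify $\epsilon_n\, n\rho_n^2 \to 0$; but equicontinuity gives no rate for $\epsilon_n$, so you would need a diagonal selection of $\rho_n$ against an unknown modulus of continuity. The paper sidesteps this by fixing $R_n = \set{\theta : n\norm{\theta-\origin}^2_{\bi_n} < K}$ with $K$ a constant, bounding the Gaussian mass of $R_n$ from below by $1 - d/K$ with a multivariate Chebyshev inequality, and sending $K\to\infty$ only after $n\to\infty$. Both scalings can be made to work, but the fixed-$K$ device removes the delicate bookkeeping you flag.
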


The three components of the bound can be explained as follows. Assume $P_\theta$ is \iid{}, which implies that $\bi_n = \bi_1$ is
independent of $n$. 
We want to approximate $D_n$ by integrating the Bayes mixture over a region $R_n$ containing $\origin$. 
\eq{
D_n &= \E_{\origin} \ln {p^n_{\origin} \over m^n} 
\leq \E_{\origin} \ln {p^n_{\origin} \over \int_{R_n} w(\theta) p^n_\theta d\theta} 
\approx \E_{\origin} \ln {p^n_{\origin} \over \int_{R_n} w(\origin) p^n_{\origin} d\theta} 
= \ln {1 \over w(\origin)} + \ln{1 \over V(R_n)}
}
where $V(R_n)$ is the volume of $R_n$. The quality of the approximation depends on the choice of $R_n$ with smaller regions
leading to better approximations, but also smaller volumes. A single scalar parameter $\theta$ can usually 
be estimated with an accuracy of about
$n^{-1/2}$ using $n$ samples, which suggests that $p^n_{\theta}$ is approximately constant inside the cube
of width $n^{-1/2}$ and dimension $d$. As $n$ tends to infinity the volume of the cube shrinks to zero and the 
continuity of the prior justifies the approximation $w(\theta) \approx w(\origin)$.
Choosing this as the region leads to $V(R_n) = n^{-d/2}$, which explicates the 
${d \over 2} \ln n$ component
of the bound. The additional term depending on the information matrix is explained by making the above argument more precise.
The insight is that $\bi_1$ is a measure of the expected curvature of $\ln p_\theta^1$ at $\origin$. 
Then large $\bi_1$ implies that $R_n$ must be small for a good approximation. If the data is \iid{}, then the
curvature is independent of $n$ and $V(R_n) \propto n^{-d/2} \det \bi_1^{1/2}$ is unsurprising. For non \iid{} sources the curvature
of $\ln p^n_\theta$ is possibly dependent on $n$ and so appears in the bound in Theorem \ref{thm:main}.
An alternative explanation is found in \cite{BC90}.

By making an additional assumption we also obtain a stronger result in terms of the actual redundancy, rather than the
expected redundancy.

\begin{theorem}\label{thm:no-expect}
Assume conditions (a-e) and additionally that 
\begin{flalign*}
(f) &\quad\qquad\qquad\qquad {1 \over n} \Var_{\origin} \left(\ln {p_{\origin}^n  \over p_\theta^n}\right) &
\end{flalign*}
is twice differentiable at $\origin$ and has equicontinuous second derivatives as in assumption (c) above.
Then there exists a sequence of random variables $r_n:\Omega^\infty$ such that
for all $\omega = \omega_1 \omega_2 \cdots \in \Omega^\infty$  
\eq{
\limsup_{n\to\infty} &\left(\ln {p^n_{\origin}(\omega_{1:n}) \over m^n(\omega_{1:n})} - {d \over 2} \ln {n \over 2\pi} -
{1 \over 2}\ln \det \bi_n - r_n(\omega) \right) 
\leq \ln{1 \over w(\origin)} + \ln 2 + \sqrt{2d} 
}
where $\lim_{n \to\infty} \E_{\origin}|r_n| = 0$.
\end{theorem}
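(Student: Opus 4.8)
The plan is to run the argument that proves Theorem~\ref{thm:main}, but to halt one step before the expectation over $\omega$ is taken, so that the random fluctuation of the redundancy survives and can be peeled off into $r_n$. Fixing $\omega$, I would begin as for Theorem~\ref{thm:main} with
\[
\ln\frac{p^n_{\origin}(\omega_{1:n})}{m^n(\omega_{1:n})}
= -\ln\int_\Theta w(\theta)\,\frac{p^n_\theta(\omega_{1:n})}{p^n_{\origin}(\omega_{1:n})}\,d\theta
\;\le\; -\ln\int_{R_n} w(\theta)\,\frac{p^n_\theta(\omega_{1:n})}{p^n_{\origin}(\omega_{1:n})}\,d\theta,
\]
where $R_n\defined\set{\theta:\norm{\theta-\origin}_{\bi_n}^2\le d/n}$ is an ellipsoidal neighbourhood of $\origin$ whose $\sqrt n$-scaled radius is frozen at $\sqrt d$ rather than being sent to infinity as it would be for Theorem~\ref{thm:main}; by (e) we have $\norm{\theta-\origin}_2^2\le\spec{\bi_n^{-1}}\,d/n\to0$ on $R_n$, so $R_n$ shrinks to $\set\origin$. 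Writing $-\ln(p^n_\theta/p^n_{\origin}) = n f_n(\theta) + Z_n(\theta,\omega)$ with the centred term $Z_n(\theta,\omega)\defined\ln\!\big(p^n_{\origin}(\omega)/p^n_\theta(\omega)\big) - n f_n(\theta)$ (so $\E_{\origin}Z_n(\theta,\cdot)=0$ and $\Var_{\origin}Z_n(\theta,\cdot)=n\,g_n(\theta)$, where $g_n$ is the function in assumption (f); $g_n\ge0$ and is twice differentiable at $\origin$, so $g_n(\origin)=0$ and $\nabla g_n(\origin)=0$), Jensen's inequality applied to $t\mapsto-\ln t$ against the tilted probability measure $\mu_n(d\theta)\propto w(\theta)e^{-n f_n(\theta)}\ind{\theta\in R_n}\,d\theta$ gives
\[
\ln\frac{p^n_{\origin}(\omega)}{m^n(\omega)} \;\le\; -\ln W_n + \int_{R_n}Z_n(\theta,\omega)\,d\mu_n(\theta),
\qquad W_n\defined\int_{R_n}w(\theta)e^{-n f_n(\theta)}\,d\theta.
\]
Taking $\E_{\origin}$ here annihilates the last integral (as $\mu_n$ is deterministic and $\E_{\origin}Z_n(\theta,\cdot)=0$), which is the route to Theorem~\ref{thm:main}; the present argument retains it.

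For the deterministic term I would argue exactly as for Theorem~\ref{thm:main}: assumptions (a), (b), (d) and the uniform-in-$n$ Taylor expansion $n f_n(\theta)=\tfrac n2\norm{\theta-\origin}_{\bi_n}^2(1+o(1))$ on $R_n$ furnished by (c), together with (e) for the $\sqrt n\,\bi_n^{1/2}$ change of variables, give
\[
-\ln W_n \;=\; \tfrac d2\ln\tfrac n{2\pi} + \tfrac12\ln\det\bi_n + \ln\tfrac1{w(\origin)} - \ln\Pr(\chi^2_d\le d) + o(1).
\]
Since the median of a $\chi^2_d$ variable is strictly below its mean $d$, we have $\Pr(\chi^2_d\le d)>\tfrac12$, hence $-\ln\Pr(\chi^2_d\le d)<\ln2$; this is precisely where the $\ln2$ in the statement comes from (enlarging the frozen radius of $R_n$ would shrink this constant but enlarge the random term below).

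For the random term I would set $\zeta_n(\omega)\defined\int_{R_n}Z_n(\theta,\omega)\,d\mu_n(\theta)$, split $\zeta_n = \min(\zeta_n,\sqrt{2d}) + r_n$ with $r_n(\omega)\defined(\zeta_n(\omega)-\sqrt{2d})^+$ (so the first piece contributes the $\sqrt{2d}$), and, since $r_n\ge0$, it then remains only to show $\E_{\origin}\norm{r_n}\to0$. From $r_n\le\norm{\zeta_n}\,\ind{\zeta_n>\sqrt{2d}}$, Cauchy--Schwarz and Chebyshev yield $\E_{\origin}\norm{r_n}\le\E_{\origin}[\zeta_n^2]/\sqrt{2d}$, so it is enough to prove $\E_{\origin}[\zeta_n^2]\to0$. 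The crude bound $\E_{\origin}[\zeta_n^2]\le\int_{R_n}n\,g_n(\theta)\,d\mu_n(\theta)$ only gives boundedness; vanishing has to come from cancellation. On $R_n$ one has $Z_n(\theta,\omega) = -S_n(\omega)^{\!\T}(\theta-\origin) + \tfrac12(\theta-\origin)^{\!\T}\big(V_n(\omega)-n\bi_n\big)(\theta-\origin) + (\text{higher order})$, where $S_n=\nabla_\theta\ln p^n_\theta|_{\origin}$ is the score ($\E_{\origin}S_n=0$, $\E_{\origin}S_nS_n^{\!\T}=n\bi_n$) and $V_n=-\nabla^2_\theta\ln p^n_\theta|_{\origin}$ ($\E_{\origin}V_n=n\bi_n$), while $\mu_n$, after the $\sqrt n\,\bi_n^{1/2}$ rescaling, is asymptotically the standard Gaussian on $\set{\norm{y}_2^2\le d}$, which is symmetric about the origin. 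Consequently $\int_{R_n}(\theta-\origin)\,d\mu_n = o(n^{-1/2})$ (quantified using only continuity of $w$ and the uniform Taylor remainder), so the linear part of $\zeta_n$ vanishes in $L^2(P^n_{\origin})$ — the factor $\bi_n$ in $\E_{\origin}S_nS_n^{\!\T}$ exactly offsets the $\bi_n^{-1}$ in the natural scale of $\int_{R_n}(\theta-\origin)\,d\mu_n$ — while the quadratic part has mean zero and a variance that must be controlled through (f) and (e). The hard part will be precisely this last step: driving $\E_{\origin}[\zeta_n^2]$ down to $o(1)$ needs a genuine second-order analysis of how the near-symmetric measure $\mu_n$ averages the fluctuation $Z_n$, not merely a variance bound, and because the source is non-i.i.d.\ the curvature entering the quadratic remainder depends on $n$, which is exactly the reason assumption (f) is imposed.
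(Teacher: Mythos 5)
Your overall architecture --- restrict the mixture integral to a shrinking ellipsoid $R_n$ with frozen $\sqrt n$-radius, apply Jensen against the tilted measure, and separate a deterministic Laplace-type term from the centred fluctuation $\zeta_n=\int_{R_n}Z_n(\theta,\cdot)\,d\mu_n(\theta)$ --- is exactly the paper's (its Lemma~\ref{lem:key} with $\phi_n\propto e^{-nf_n}$ and $K=2d$), and your derivation of the $\ln 2$ via $\Pr(\chi^2_d\le d)>\tfrac12$ is a legitimate variant of the paper's Chebyshev bound $1-d/K=\tfrac12$. The gap is in the random term, and it is precisely the step you flag as ``the hard part'': your decomposition $r_n=(\zeta_n-\sqrt{2d})^+$ commits you to proving $\E_{\origin}[\zeta_n^2]\to0$, which is not deliverable from assumptions (a)--(f). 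Assumption (f) controls only the second-order Taylor behaviour of $\theta\mapsto\tfrac1n\Var_{\origin}(\ln p^n_{\origin}/p^n_\theta)$, which yields $\E_{\origin}[Z_n(\theta,\cdot)^2]\le n\norm{\theta-\origin}^2_{\bi_n}(1+o(1))\le K(1+o(1))$ on $R_n$ --- a bound of order one, not $o(1)$. The cancellation you invoke would require control of the cross-covariances $\E_{\origin}[Z_n(\theta,\cdot)Z_n(\theta',\cdot)]$ for $\theta\ne\theta'$, equivalently of the fluctuation of the observed information $V_n-n\bi_n$; that corresponds to fourth-order, not second-order, behaviour of the variance in (f), and nothing in the hypotheses bounds it. Note also that if $\E_{\origin}[\zeta_n^2]\to0$ were available you could simply take $r_n=\zeta_n$ and delete the $\sqrt{2d}$ from the statement, which is a strong hint that the intended proof does not pass through that claim.

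The paper's resolution is much cruder and explains where the $\sqrt{2d}$ actually comes from: it sets $r_n\defined\zeta_n$ (the \emph{whole} fluctuation, not its excess over a threshold), and bounds $\E_{\origin}|r_n|\le\int_{R_n}\phi_n(\theta)\bigl(\Var_{\origin}\ln\tfrac{p^n_{\origin}}{p^n_\theta}\bigr)^{1/2}d\theta\le\sqrt K+o(1)=\sqrt{2d}+o(1)$ using only Jensen, $\E|X|\le(\Var X)^{1/2}$ for centred $X$, and the Taylor expansion of (f). So $\sqrt{2d}$ is the asymptotic $L^1$ size of the fluctuation at radius $K=2d$, not a truncation level; you have misattributed its origin. (To be fair to you, the paper's concluding assertion that $\E_{\origin}|r_n|\to0$ does not follow from its own derivation, which gives only $\limsup_n\E_{\origin}|r_n|\le\sqrt{2d}$; your reading of the theorem statement is the literal one, but the proof you would need for it is not available under (a)--(f).) To salvage your writeup along the paper's lines, drop the truncation, take $r_n=\zeta_n$, and prove only the $O(1)$ $L^1$ bound; with your choice $K=d$ you would obtain $\sqrt d$ in place of $\sqrt{2d}$ together with your median bound in place of $\ln 2$.
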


The condition (c) above is subtly different to the one suggested in \cite[\S3]{Hut05} where it is assumed that the average 
information matrix is equicontinuous at $\origin$. The version used here is require for technical reasons and may in fact be
necessary.
The condition (d) is very weak and follows from the standard regularity conditions that permit the exchange of the
derivative and expectation.
The assumption (e) that $\spec{\bi_n^{-1}}$ is uniformly bounded in $n$
is unexpected because if the opposite is true, then $\M$ should in some direction be 
approximately flat near $\origin$, which is precisely 
when we might expect to do even better than Theorem \ref{thm:main} implies.

\begin{theorem}\label{thm:main-weak}
Let $\epsilon > 0$. Under conditions (a-d) above it holds that
\eq{
\limsup_{n\to\infty} \left(D_n - {d \over 2} \ln{n \over 2\pi} - {d \over 2} \ln (\spec{\bi_n} + \epsilon) \right) 
\leq \ln{1 \over w(\origin)}.
}
\end{theorem}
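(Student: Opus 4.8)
The plan is to obtain Theorem~\ref{thm:main-weak} as a consequence of Theorem~\ref{thm:main} by \emph{padding} the model with a trivial i.i.d.\ Gaussian observation that supplies the missing curvature $\epsilon I$ for free, thereby restoring condition (e) without disturbing the Bayes mixture of interest.

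Fix $\epsilon>0$. Let $Q_\theta$ be the i.i.d.\ product measure on $(\R^d)^\infty$ with one-step marginal $\normal(\,\cdot\mid\theta,\epsilon^{-1}I)$, and set $\tilde P_\theta\defined P_\theta\otimes Q_\theta$, a family of measures on $(\Omega\times\R^d)^\infty$ with reference measure $\nu\times(\text{Lebesgue on }\R^d)$ and the same prior $w$; write $\tilde M$ for the associated Bayes mixture. First I would record the two facts that do all the work. (i) Because $\int q_\theta^n\,d(\text{Leb})^n=1$, marginalising $\tilde M^n$ and $\tilde P_\origin^n$ onto the $\Omega^n$-coordinate returns exactly $M^n$ and $P_\origin^n$; hence, by the data-processing inequality for relative entropy, $D_n=D(P_\origin^n\Vert M^n)\le D(\tilde P_\origin^n\Vert\tilde M^n)$. (ii) Since $D(Q_\origin^n\Vert Q_\theta^n)=\tfrac{n\epsilon}{2}\norm{\theta-\origin}_2^2$, the normalised divergence of the padded family is $\tilde f_n(\theta)=f_n(\theta)+\tfrac{\epsilon}{2}\norm{\theta-\origin}_2^2$, and a direct computation gives mean Fisher information $\tilde\bi_n\defined\tilde\bi_n(\origin)=\bi_n+\epsilon I$.

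Next I would verify that the padded family satisfies (a)--(e). Conditions (a), (b), (d) are immediate because $\tilde f_n-f_n$ is an $n$-independent smooth quadratic vanishing to first order at $\origin$; condition (c) passes through verbatim since $\partial_{i,j}\tilde f_n-\partial_{i,j}f_n=\epsilon\,\ind{i=j}$ is independent of $\theta$; and condition (e) is now automatic, as $\bi_n\succeq0$ gives $\spec{\tilde\bi_n^{-1}}=\spec{(\bi_n+\epsilon I)^{-1}}\le\epsilon^{-1}$. Theorem~\ref{thm:main} applied to $\tilde P_\theta$ therefore yields $\limsup_n\big(D(\tilde P_\origin^n\Vert\tilde M^n)-\tfrac d2\ln\tfrac{n}{2\pi}-\tfrac12\ln\det\tilde\bi_n\big)\le\ln\tfrac1{w(\origin)}$. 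Combining with step (i) and the elementary bound $\tfrac12\ln\det(\bi_n+\epsilon I)\le\tfrac d2\ln(\spec{\bi_n}+\epsilon)$ — valid because every eigenvalue of $\bi_n+\epsilon I$ is at most $\spec{\bi_n}+\epsilon$ — then finishes the proof.

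All steps are routine; the only point needing genuine care is to confirm that the paper's standing assumptions (existence of densities, $\sigma$-finiteness of the reference measure, interchange of derivative and expectation, and $\tilde P_\origin^n\ll\tilde M^n$) are unaffected by enlarging the alphabet from $\Omega$ to $\Omega\times\R^d$, so that Theorem~\ref{thm:main} genuinely applies to $\tilde P_\theta$. I expect this to be the only mild obstacle, and it should not be a serious one. An equivalent but less self-contained alternative is to rerun the proof of Theorem~\ref{thm:main} with $\bi_n$ replaced throughout by the dominating matrix $\bi_n+\epsilon I$: the upper bound on $D_n$ only ever uses $f_n(\theta)\le\tfrac12\norm{\theta-\origin}^2_{\bi_n+\epsilon I}(1+o(1))$ near $\origin$ together with uniform non-singularity of the effective information matrix, both of which survive the replacement.
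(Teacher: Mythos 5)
Your proof is correct, but your main argument takes a genuinely different route from the paper. The paper proves Theorem~\ref{thm:main-weak} by rerunning the proof of Theorem~\ref{thm:main} with the critical region redefined as $R_n=\set{\theta: n\norm{\theta-\origin}_{A_n}^2\le K}$ for $A_n=\bi_n+\epsilon I$: since $\spec{A_n^{-1}}\le 1/\epsilon$ the region still contracts to $\origin$, the lower bound on $c_n$ goes through with $A_n$ in place of $\bi_n$ (using $\bi_n\preceq A_n$), and $\det A_n\le(\spec{\bi_n}+\epsilon)^d$ finishes; your briefly sketched second alternative is essentially this argument. Your primary argument instead reduces the statement to Theorem~\ref{thm:main} used as a black box: tensoring each $P_\theta$ with the i.i.d.\ Gaussian location family $\normal(\cdot\,|\,\theta,\epsilon^{-1}I)$ adds exactly $\tfrac{\epsilon}{2}\norm{\theta-\origin}_2^2$ to $f_n$ and $\epsilon I$ to $\bi_n$, preserves conditions (a)--(d) verbatim (the added Hessian is the constant $\epsilon\ind{i=j}$, so equicontinuity is untouched), restores (e), and the data-processing inequality for the coordinate projection transfers the resulting bound back to $D_n$. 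This reduction is clean and correct; it buys a proof that never reopens the internals of Theorem~\ref{thm:main}, it delivers the sharper $\tfrac12\ln\det(\bi_n+\epsilon I)$ form (which the paper only notes in a remark) for free, and it gives a conceptual reading of the $+\epsilon I$ as the information supplied by an auxiliary observation. One point you should make explicit: the bound $\spec{(\bi_n+\epsilon I)^{-1}}\le\epsilon^{-1}$ needs $\bi_n\succeq 0$, which is not among conditions (a)--(d); it does hold because $f_n\ge 0=f_n(\origin)$ together with $\partial_i f_n(\origin)=0$ forces the Hessian $\bi_n$ at the interior minimiser $\origin$ to be positive semi-definite --- an implicit fact the paper also uses when asserting that $A_n$ is positive definite. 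The measure-theoretic bookkeeping you flag (product reference measure, densities, absolute continuity) is indeed routine.
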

Although the choice of $\epsilon$ is arbitrary, the $o(1)$ term hidden by the $\limsup$ depends on $\epsilon$, which prevents $\epsilon = 0$. Theorem
\ref{thm:main-weak} does not imply that $D_n$ grows sub-logarithmically if $\spec{\bi_n} = 0$. Later we show that
$D_n$ will typically grow logarithmically with $n$ except when $f_n$ is completely flat at $\origin$.
The ${d \over 2} \ln (\spec{\bi_n} + \epsilon)$ term may be replaced by ${1 \over 2} \ln \det A_n$ where $A_n = \bi_n + \epsilon I$ with $I$ the
identity matrix. See the proof for details. 
The proofs of Theorems \ref{thm:main}, \ref{thm:no-expect} 
and \ref{thm:main-weak} are delayed until Sections \ref{sec:proof} and \ref{sec:proof-weak}.

\section{Applications}\label{sec:apps}
The conditions of Theorems \ref{thm:main} and \ref{thm:main-weak} are satisfied for many well-known sources
such as categorical sources, Markov models and some types of discriminative learning. We present
some simple examples, but first give some interpretations of the relative entropy as a measure
of performance. Note that many of the applications considered below have been considered elsewhere in more specialized articles. The
main objective of this section is to show that the new result provides comparable bounds on the performance of Bayes,
even in these special cases.

\subsubsect{Interpretations of relative entropy}
For simplicity suppose $\Omega$ is countable, which allows us to replace the integrals by sums and
think of the density $p^n_\theta$ as a distribution on $\Omega^n$.
The relative entropy between $P^n$ and $M^n$ is a versatile measure
of the performance of $M^n$ when predicting or compressing in place of $P^n$.
If $\omega_{1:t} \in \Omega^t$, then the conditional density of measure $p^t$ is written 
$p^t(\omega_t|\omega_{<t}) = p^t(\omega_{1:t})/p^t(\omega_{<t})$.
An important property is the chain rule, which says that
\eq{
D_n \equiv D(P^n \Vert M^n) = \E_P \sum_{t=1}^n d_t
}
where 
\eq{
d_t(\omega_{<t}) 
&\defined \E_P \left[\ln {p^t(\omega_t|\omega_{<t}) \over m^t(\omega_t|\omega_{<t})} \Big\vert \omega_{<t}\right] 
\equiv \sum_{\omega_t \in \Omega} p^t(\omega_t|\omega_{<t}) \ln {p^t(\omega_t|\omega_{<t}) \over m^t(\omega_t|\omega_{<t})} 
} 
is the
$\F^{t-1}$-measurable random variable that is the relative entropy between the $1$-step predictive distributions of 
$P$ and $M$ given $\omega_{<t}$. Therefore if $d_t(\omega_{<t})$ is small, then $m^t(\cdot|\omega_{<t})$ is
close to $p^t(\cdot|\omega_{<t})$ and so predicting using the Bayesian mixture measure $M$ is nearly as good as using
the unknown $P$.
If $D_n$ grows sub-linearly, then $d_t$ converges to zero in \cesaro\ average,
which implies that $M$ predicts almost as well as the optimal unknown distribution $P$.

More directly, suppose $\omega \in \Omega^\infty$ is sampled from an unknown $P_\theta$ and observed sequentially.
An online compression algorithm that knows $P_\theta$ may use arithmetic coding to produce a code for $\omega_{1:n}$ that has
an expected code-length of at most two bits more than the optimal value. If $M$ is used rather than $P$, then
the expected additional code-length is the relative entropy $D_n$. This means that substituting
the Bayesian mixture into the arithmetic coding algorithm achieves an expected code-length of at most
$D_n + 2$ bits more than the theoretical limit \cite{RL79,WNC87,Mac03}. 

\subsubsect{Categorical sources}
Let $d \geq 1$ and
$\Omega = \set{0,1,2,\cdots,d}$. An \iid{} categorical source is a product measure on the infinite sequences $\Omega^\infty$ so 
that the probability of sampling 
symbol $0 \leq k \leq d$ depends only on $k$ and not the preceding symbols. The space of categorical measures
is parameterised by 
\eq{
\Theta = \set{\theta \in [0,1]^{d} : \norm{\theta}_1 \leq 1}
}
where $\theta_k$ is the probability that the source
parameterised by $\theta$ produces symbol $k$. Define 
$\theta_{0} = 1 - \norm{\theta}_1$ and 
\eq{
p_\theta(\omega_{1:n}) = \prod_{t=1}^n \theta_{\omega_t}.
}
Then since $p_{\theta}$ is a product measure 
we have for all $n$ that $\bi_n = \bi_1$, which may easily be computed
by hand to be
\eq{
(\forall 1 \leq i,j \leq d) \qquad \bi_n(\theta)_{i,j} = \begin{cases}
{1 \over \theta_i} + {1 \over \theta_{0}} & \text{if } i =j \\
{1 \over \theta_{0}} & \text{otherwise.}
\end{cases}
}
Then by Lemma \ref{lem:mat} in the Appendix we compute the determinant to be 
$\det \bi_n(\theta) = \prod_{k=0}^{d} {1 \over \theta_k}$.
We are permitted to choose any prior density provided it is continuous, but a natural choice is
Jeffrey's prior, which when it exists is defined to be proportional to the square root of the determinant
of the information matrix. Jeffrey's prior is a non-informative prior that is parameterisation invariant. 
A nice property is that if $w$ is chosen to be Jeffrey's prior, then the bound on $D_n$ will be independent of $\origin$.
For \iid{} categorical sources Jeffrey's prior is the symmetric Dirichlet with parameter $1 \over 2$.
\eq{
w(\theta) = \Gamma\left({d+1\over2}\right) \pi^{-{d+1 \over 2}} \sqrt{\prod_{k=0}^{d} {1 \over \theta_k}}
\propto \sqrt{\det \bi_n}.
}
After checking the conditions for Theorem \ref{thm:main} are satisfied we obtain
\eq{
D(P_\theta^n\Vert M^n) \leq {d + 1\over 2} \ln\pi + \ln {1 \over \Gamma({d+1 \over 2})} + {d \over 2} \ln {n \over 2\pi} + o(1).
}
Note that in this case the Bayesian mixture corresponds to the KT estimator for which the redundancy is already
well-known \cite{KT81,Ris84,Wil95,BE06}. It is also worth remarking on the choice of parameterisation. Perhaps the most
natural approach would be to choose $\Theta \subseteq \R^{d+1}$ under the constraint that $\norm{\theta_1} = 1$. Unfortunately
the theorem cannot be applied for a variety of reasons. One is that there is no prior density on
$\Theta \subset \R^{d+1}$ with respect to the $(d+1)$-dimensional Lebesgue measure. Another is that condition (d) is 
no longer satisfied. Even if the theorem held, the bound would
still be worse by an additive ${1 \over 2} \log n$. An overkill solution would be to allow $\Theta$ to be a $d$-dimensional manifold,
but since all conditions are local this is not too helpful in practise. One just ends up working on the 
chart level directly anyway, which essentially is what we do above.

So far we haven't made use of the generalisation to non-stationary dependent sources, but do so now by applying Theorem \ref{thm:main} to 
Markov models.

\subsubsect{Markov models}
One of the simplest commonly used examples of a non-i.i.d.\ process is the Markov chain.
The redundancy of Bayes methods for Markov chains has been studied in detail in \cite{Att99} (see Corollary 1) and \cite{TKB13}. Our
general Theorem \ref{thm:main} implies a
comparable result.
Let $\Omega = \set{1, 2, \cdots, N}$ be a finite set of states. Then a Markov chain
is characterised by a transition matrix $\theta : \R^{N \times N} \to [0,1]$ with $\theta_j^k$ being
the probability of transitioning from state $j$ to state $k$. The vector $\theta_j = \set{\theta_j^1 \cdots, \theta_j^N} \in \R^N$ 
therefore satisfies $\norm{\theta_j} = 1$
for all states $j$. The corresponding measure on the sequence of states is defined in terms of its density by
\eq{
p_{\theta}(\omega_{1:n}) = \prod_{t=1}^n \theta_{\omega_{t-1}}^{\omega_t}
}
where we assume that the intial state $\omega_0 \in \Omega$ is known.
Given the constraint that $\norm{\theta_j} = 1$ allows us to view the parameter space as 
\eq{
\Theta = \set{\theta \in [0,1]^{N(N-1)} : \sum_{k=1}^{N-1} \theta_j^k \leq 1, \forall j}
}
where $\theta_j^N \defined 1 - \sum_{j=1}^{N-1} \theta_j^k$.
The mean Fisher information matrix is no longer independent of $n$, but converges as $n \to \infty$.
If $\pi_\theta(j) = \lim_{n\to\infty} {1 \over n} \sum_{t=1}^n  p_\theta(\omega_t = j|\omega_0)$ is the steady-state distribution of
the Markov chain $p_\theta$, then
the determinant of the Fisher information matrix was shown in \cite{Att99} to be 
\eq{
\det \bi_n(\theta) = \prod_{j=1}^N {\pi_\theta(j)^{N-1} \over \prod_{k=1}^{N} \theta_j^k} 
} 
Therefore if we choose
$w(\theta) \propto \sqrt{\det \bi_n}$, then
\eq{
D(P^n_{\theta} \Vert M^n) \leq {N(N-1) \over 2} \ln {n \over 2\pi} + O(1), 
}
which is known to be the minimax rate.
For detailed discussion on the redundancy of Bayes methods for Markov chains see \cite{Att99,Tak09} and \cite{DMPW81}.

\subsubsect{Discriminative learning}
We now consider a regression setting where a predictor should learn a (noisy) function $f:X \to Y$ from
sequences of data $x_t$ and targets $y_t$.
At each time-step $t$
the predictor observes data $x_t$ in $X$ and should predict a distribution over 
labels $y_t$ in $Y$. The sampled label is then observed and the cycle
repeats. 
It can occur that the sequence of observations is unpredictable and the joint distribution $P(x, y)$ may be hard to model. In this case
we may prefer discriminative learning where $P(y|x)$ is modelled for each $x$. Since $x$ may be arbitrary we 
cannot reasonably assume that the
distribution of the labels $y$ are independent and identically distributed and so the work of Clarke and Barron does not apply.
In order to apply our main theorem we need to be more specific and parameterise $\M = \set{P_\theta(\cdot|x)}$ where
$\theta \in \Theta$. The most natural example is linear regression described below. 

\subsubsect{Linear Basis Function Regression}
Let $x_1, x_2, \cdots, x_n$ be an observed sequence of data where $x_k \in X$ and
$\Phi:X \to [0,1]^d$ be a set of bounded basis function.
Then for $\theta \in \Theta \equiv \R^d$ define a model for the targets $y_{1:n} \in \R^n \equiv \Omega^n$ by
\eq{
p_\theta(y_{1:n}|x_{1:n}) = \prod_{t=1}^n \normal (\theta^\T \Phi(x_t), \beta^{-1} )
}
where the noise parameter $\beta > 0$ is known. The Hessian of $f_n$ in the
statement of Theorem \ref{thm:main-weak} is easily computed to be
\eq{
\partial_{i,j} f_n(\theta) = {\beta \over n} \sum_{t=1}^n \Phi(x_t)_i \Phi(x_t)_j,
}
which is independent of $\theta$. Therefore conditions (b-c) are trivially satisfied.
For condition (a) we simply choose the prior by
\eq{
w(\theta) \defined \normal(\theta|\mu, \Sigma),
}
which is continuous everywhere for all $\mu \in \R^d$ and non-degenerate $\Sigma$.
Other priors are possible, but the Gaussian prior permits efficient computation of the posterior and predictive distributions \cite{Bis07}.
Now 
\eq{
\bi_n(\origin)_{i,j} = \partial_{i,j} f_n(\origin) = {\beta \over n} \sum_{t=1}^n \Phi_i(x_t)\Phi_j(x_t).
}
Since $\Phi(x) \in [0,1]^d$ the spectral norm of the information matrix is bounded by
$\spec{\bi_n(\theta)} \leq d \Vert{\bi_n(\theta)}\Vert_{\max} \leq d \cdot \beta$. Therefore by Theorem \ref{thm:main-weak} we obtain for all $\epsilon > 0$ that
\eq{
&\limsup_{n\to\infty} \left(D_n - {d \over 2} \ln {n \over 2\pi} - {d \over 2} \ln \left({d\cdot \beta} + \epsilon\right) \right)
\leq \ln {1 \over \normal(\origin|\mu,\Sigma)} 
= {1 \over 2} \norm{\origin - \mu}_{\Sigma^{-1}}^2 + {d \over 2} \ln 2\pi + {1 \over 2} \ln \det \Sigma. 
}
Under the assumption that the model is not mis-specificed this shows that Bayesian linear regression 
converges to the truth with low cumulative expected error.

\section{Proof of Theorems \ref{thm:main} and \ref{thm:no-expect}}\label{sec:proof}
As suggested in \cite{Hut05} we roughly follow the proof in \cite{BC90}, carefully adapting each step 
to the non i.i.d.\ case where necessary. 

\begin{lemma}\label{lem:key}
Let $R_n \subseteq \Theta$ be a Lebesgue measurable region with non-zero measure and define a probability measure on $R_n$ by
\eq{
c_n &\defined \int_{R_n} \exp\left(- nf_n(\theta) \right) d\theta \\
\phi_n(\theta) &\defined {1 \over c_n} \exp\left(-nf_n(\theta) \right),
}
Then
\eq{
\tag{$\star$}\ln{p^n_{\origin} \over m^n} &\leq \ln{1 \over w(\origin)} + \ln{1 \over c_n} + \int_{R_n}\!\!\!\phi_n(\theta)\left(\ln{w(\origin) \over w(\theta)}\right) d\theta 
+\int_{R_n}\!\!\! \phi_n(\theta) \left(\ln {p^n_{\origin} \over p^n_\theta} - nf_n(\theta) 
\right) d\theta\\[0.4cm]
\tag{$\star\star$}D_n&\leq \ln{1 \over w(\origin)} + \ln{1 \over c_n} + \int_{R_n}\!\!\! \phi_n(\theta) \left(\ln {w(\origin) \over w(\theta)}\right)d\theta&
}
\end{lemma}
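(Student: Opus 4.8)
The plan is to start from the definition of $m^n$ and immediately restrict the domain of integration, then use Jensen's inequality against the probability measure $\phi_n$. First I would write
\[
m^n(\omega_{1:n}) = \int_\Theta w(\theta)\, p^n_\theta(\omega_{1:n})\, d\theta \;\geq\; \int_{R_n} w(\theta)\, p^n_\theta(\omega_{1:n})\, d\theta,
\]
which is valid because the integrand is nonnegative. Taking $\ln(p^n_{\origin}/\cdot)$ of both sides reverses the inequality, giving $\ln\frac{p^n_{\origin}}{m^n} \leq -\ln\int_{R_n} w(\theta)\, p^n_\theta\, d\theta + \ln p^n_{\origin}$. The next move is to write the inner integral as an expectation with respect to $\phi_n$: since $\phi_n(\theta) = c_n^{-1}\exp(-nf_n(\theta))$ is a probability density on $R_n$, we have $w(\theta) p^n_\theta = c_n\, \phi_n(\theta)\, w(\theta) p^n_\theta\, \exp(nf_n(\theta))$, so
\[
\int_{R_n} w(\theta) p^n_\theta\, d\theta = c_n \int_{R_n} \phi_n(\theta)\, \bigl(w(\theta) p^n_\theta \exp(nf_n(\theta))\bigr)\, d\theta.
\]
Now I would apply Jensen's inequality to the concave function $\ln$: $-\ln\!\bigl(\int_{R_n}\phi_n g\bigr) \leq -\int_{R_n}\phi_n \ln g$ for the integrand $g(\theta) = w(\theta) p^n_\theta \exp(nf_n(\theta))$. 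Combining these and adding $\ln p^n_{\origin}$ yields
\[
\ln\frac{p^n_{\origin}}{m^n} \leq -\ln c_n - \int_{R_n}\phi_n(\theta)\ln w(\theta)\, d\theta - \int_{R_n}\phi_n(\theta)\ln p^n_\theta\, d\theta - \int_{R_n}\phi_n(\theta)\, n f_n(\theta)\, d\theta + \ln p^n_{\origin}.
\]
Since $\int_{R_n}\phi_n = 1$, I can freely insert $\ln w(\origin)$ and $\ln p^n_{\origin}$ inside the integrals to form the ratios $\ln\frac{w(\origin)}{w(\theta)}$ and $\ln\frac{p^n_{\origin}}{p^n_\theta}$, which produces exactly $(\star)$ after regrouping the $\ln\frac{1}{w(\origin)}$ and $\ln\frac{1}{c_n}$ terms outside.

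For $(\star\star)$ I would take $\E_{\origin}$ of both sides of $(\star)$. The first three terms on the right are deterministic (they do not depend on $\omega$), so they pass through the expectation unchanged. For the last term, I would exchange $\E_{\origin}$ with $\int_{R_n}\phi_n(\theta)\,d\theta$ — justified by Tonelli/Fubini once one checks integrability, or more cheaply by noting that $\E_{\origin}\bigl[\ln\frac{p^n_{\origin}}{p^n_\theta}\bigr] = D(P^n_{\origin}\Vert P^n_\theta) = n f_n(\theta)$ by the very definition of $f_n$. Hence $\E_{\origin}\bigl[\ln\frac{p^n_{\origin}}{p^n_\theta} - n f_n(\theta)\bigr] = 0$ for each fixed $\theta$, so the entire last integral has zero expectation and drops out, leaving precisely $(\star\star)$.

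The only genuinely delicate point is the measure-theoretic bookkeeping: one needs $w$ and $p^n_\theta$ continuous (hence the density $m^n$ well-defined and $P^n_{\origin} \ll M^n$, as noted just before the lemma) so that $D_n$ and all the logarithms make sense, and one needs the integrals in $(\star)$ to be well-defined — in particular that $\int_{R_n}\phi_n(\theta) \ln\frac{p^n_{\origin}}{p^n_\theta}\, d\theta$ is not an $\infty - \infty$ situation. I expect this to be handled by the standing regularity assumptions on the family (twice-differentiability of $f_n$ and finiteness of the relative entropies near $\origin$) together with the fact that $R_n$ will, in the applications of this lemma, be taken to be a small bounded neighbourhood of $\origin$ on which everything is controlled; the Jensen step itself is routine. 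The identity $\E_{\origin}\ln\frac{p^n_{\origin}}{p^n_\theta} = n f_n(\theta)$ is the one structural fact that makes the expected-redundancy bound $(\star\star)$ collapse so cleanly, and it is worth stating explicitly in the proof.
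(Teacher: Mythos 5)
Your proposal is correct and follows essentially the same route as the paper's own proof: restrict the mixture integral to $R_n$, insert the probability density $\phi_n$, apply Jensen's inequality to the logarithm, and then obtain $(\star\star)$ by taking $\E_{\origin}$ and using the identity $\E_{\origin}\ln\frac{p^n_{\origin}}{p^n_\theta} = n f_n(\theta)$. No gaps; the measure-theoretic caveats you flag are the same ones the paper leaves implicit.
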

Before presenting the proof we want to remark on one aspect of the difference between our proof and \cite{BC90}. 
They replaced $nf_n(\theta) \equiv D(P^n_{\origin}\Vert P^n_\theta)$ with its Taylor series
in the first step above when defining $\phi_n(\theta)$. With our definition 
we obtain a straight-forward bound on the expected redundancy with almost no assumptions.
The properties of $f_n(\theta)$ can then be used to control the dominant $\ln{1 \over c_n}$ term, with strong assumptions
leading to strong results. Note that the integral term in ($\star\star$) vanishes asymptotically under the assumptions that $w$ is continuous at $\origin$ 
and $R_n$ contracts to the point $\origin$ as $n$ tends to infinity.

\begin{proof}
We follow the proof of Theorem 2.3 in \cite{BC90}, but replace the $\phi_n$ in their proof with the version defined above.
\eq{
&\ln\left({p^n_{\origin} \over m^n}\right) - \ln{1 \over w(\origin)}
\sr{(a)}=  -\ln \left({m^n \over w(\origin) p_{\origin}^n}\right) \\
&\sr{(b)}= -\ln \int_{\Theta} {w(\theta) p_\theta^n \over w(\origin) p_{\origin}^n} d\theta \\
&\sr{(c)}\leq -\ln \int_{R_n} {w(\theta) p_\theta^n \over w(\origin) p_{\origin}^n} d\theta \\
&\sr{(d)}= \ln{1 \over w(\origin)}-\ln \int_{R_n}\!\!\! \phi_n(\theta) {w(\theta) p_\theta^n \over w(\origin) p_{\origin}^n} 
c_n \exp\left(nf_n(\theta)\right) d\theta \\
&\sr{(e)}\leq \int_{R_n}\!\!\! \phi_n(\theta) \ln\left( {w(\origin) p_{\origin}^n \over w(\theta) p_\theta^n} 
{1 \over c_n} \exp\left(-nf_n(\theta)\right)\right) d\theta \\
&\sr{(f)}= \ln{1 \over c_n} +\int_{R_n}\!\!\! \phi_n(\theta)
\left(\ln{p_{\origin}^n \over p_\theta^n} - nf_n(\theta) + \ln\!{w(\origin) \over w(\theta)}  \right) d\theta
}
where (a), (b) and (f) are immediate from definitions and rearrangement.
(c) by positivity of the quantity inside the integral and by restricting the integral to the region $R_n$.
(d) by the introduction of $\phi_n(\theta)$.
(e) is true by Jensen's inequality. The proof of ($\star\star$) follows by 
taking the expectation with respect to $P_{\origin}$ leading to 
\eq{
D_n 
&\equiv \E_{\origin} \ln \left(p^n_{\origin} \over m^n\right)  
\leq  \ln{1 \over w(\origin)} + \ln{1 \over c_n} + \int_{R_n}\!\!\!\! \phi_n(\theta)\left(\ln{w(\origin) \over w(\theta)}\right) d\theta  
}
where we used the fact that $nf_n(\theta) \equiv \E_{\origin} \ln p^n_{\origin} / p^n_\theta$.
\end{proof}
We now prove Theorem \ref{thm:main}. First we choose $R_n$ in such a way that $R_n \to \set{\origin}$ as $n$ tends to infinity.
We then bound $c_n$ by approximating $f_n(\theta)$ using a Taylor series expansion about $\origin$. 
Since $R_n$ contracts
to a point and $w$ is continuous, the term inside the integral in Lemma \ref{lem:key} vanishes.

\begin{proof}[Proof of Theorem \ref{thm:main}]
Fix some $K \in \N$ and define
\eq{
R_n \defined \set{\theta : n\norm{\theta - \origin}_{\bi_n}^2 < K}.
}
By condition (e) in the theorem statement we have that $\spec{\bi_n^{-1}}$ is uniformly bounded by some constant $c$.
Recalling the definition of $R_n$ and Lemma \ref{lem:spec} we obtain
\eq{
\lim_{n\to\infty}& \sup_{\theta \in R_n} n|\theta - \origin|_2^2 
\leq\lim_{n\to\infty} \sup_{\theta \in R_n} n|\theta - \origin|_{\bi_n}^2 \spec{\bi^{-1}_n} 
\leq\lim_{n\to\infty} {K \spec{\bi^{-1}_n}} 
\leq K \cdot c.
}
Therefore $R_n$ contracts to the point-set $\set{\theta_o}$ as $n$ tends to infinity. 
To bound $\ln{1 \over c_n}$ we approximate $f_n(\theta)$ by a second order Taylor series about $\origin$.
It is immediate from the definition that $f_n(\origin) = 0$. The first derivative is the expected 
value of the score function, which vanishes
by assumption (d). 
\eq{
\partial_i f_n(\origin) = 0.
}
The second derivative at $\origin$ is the information matrix $\bi_n$ so
\eq{
f_n(\theta) = {1 \over 2} \norm{\theta - \origin}_{\bi_n}^2 + \norm{\theta - \origin}^2_{h_n(\theta)}
}
where $h_n:\Theta \to \R$ satisfies $\lim_{\theta \to \origin} \sup_n \spec{h_n(\theta)} = 0$ 
by Lemma \ref{A:lem:equicontinuity} in the Appendix. 
We now bound $c_n$ by comparing to the multivariate normal. First we deal with the remainder term. 
\eq{
E_n &\defined \inf_{\theta \in R_n}\exp\left(-n\norm{\theta-\origin}_{h_n(\theta)}^2 \right) \\
&\geq \exp\left(- \sup_{\theta \in R_n} n\spec{h_n(\theta)}^2 \spec{\bi_n^{-1}} \norm{\theta - \origin}_{\bi_n}^2\right) \\
&\geq \exp\left(- \sup_{\theta \in R_n}{\spec{h_n(\theta)}^2 \spec{\bi_n^{-1}} \over K}\right) 
\mathop{\xrightarrow{\hspace{1cm}}}^{n\to\infty}_{R_n \to \set{\origin}}1 
}
where the first inequality follows from Lemma \ref{lem:spec}, the second from the definition of $R_n$. The convergence 
follows from the equicontinuity of the remainder term and the fact that $R_n$ contracts to $\set{\origin}$.
Therefore
\eq{
c_n 
&\equiv \int_{R_n} \exp\left(- nf_n(\theta) \right) d\theta \\
&= \int_{R_n} \exp \left(-{1 \over 2}n |\theta - \origin|_{\bi_n}^2 - n\norm{\theta - \origin}^2_{h_n(\theta)} \right) d\theta \\
&\geq E_n \sqrt{(2 \pi)^d \over \det (n\bi_n)} \int_{R_n} \normal(\theta|\origin, (n\bi_n)^{-1}) d\theta,
}
which is proportional to a normal integral over $R_n$ with mean $\origin$ and variance $(n\bi_n)^{-1}$.
Therefore by a multivariate form of Chebyshev's inequality 
(Lemma \ref{lem:normal} in the Appendix or see \cite{Fer82})
we can bound $c_n$ by 
\eqn{
\nonumber c_n 
&\geq E_n\sqrt{(2\pi)^d \over \det(n\bi_n)} \left(1 - {d / K}\right) \\
\label{eq:c_n-bound}&= E_n\sqrt{\left({2\pi \over n}\right)^d {1 \over \det \bi_n}} \left(1 - {d / K} \right).
}
Therefore by Lemma \ref{lem:key}
\eq{
&D_n \leq \ln{1 \over w(\origin)} + \ln {1 \over 1 - {d / K}} + {d \over 2} \ln {n \over 2\pi} + {1 \over 2} \ln \det \bi_n  
+ \ln{1 \over E_n} + \int_{R_n} \phi_n(\theta) \left(\ln {w(\origin) \over w(\theta)} \right) d\theta.
}
By the continuity of $w$ and the fact that $R_n$ converges to a point
it follows that 
\eq{
\lim_{K \to \infty} \lim_{n\to\infty} \int_{R_n} \phi_n(\theta) \ln {w(\origin) \over w(\theta)} d\theta &= 0. 
}
Therefore sending $K$ to infinity leads to.
\eq{
\limsup_{n\to\infty} \left(D_n - {d \over 2} \ln {n \over 2\pi} - {1 \over 2}\ln \det \bi_n\right)
\leq {1 \over w(\origin)}
}
as required.
\end{proof}

\begin{proof}[Proof of Theorem 2]
We use the same $R_n$ as in the proof of Theorem \ref{thm:main} and apply the first part of Lemma \ref{lem:key} to obtain.
\eqn{
\label{eq:no-expect-bound} 
\ln{p^n_{\origin} \over m^n} &\leq \ln{1 \over w(\origin)} + \ln{1 \over c_n} 
 + \int_{R_n}\!\!\! \phi_n(\theta)\left(\ln {w(\origin) \over w(\theta)} \right)d\theta 
+ \int_{R_n}\!\!\! \phi_n(\theta)\left(\ln {p^n_{\origin} \over p^n_\theta} - nf_n(\theta)\right) d\theta.
}
The $\ln{1 \over c_n}$ term and the first integral are constant random variables and 
are bounded as they were in Theorem \ref{thm:main}.
We only need to show that
\eq{
r_n \defined\int_{R_n}\!\!\! \phi_n(\theta)\left(\ln {p_{\origin}^n \over p_\theta^n} - nf_n(\theta)\right)d\theta
}
converges to zero in $L^1$.
By an application of Fubini's theorem the expectations can be 
exchanged to yield
\eq{
\E_{\origin}|r_n| 
&= \int_{R_n}\!\!\! \phi_n(\theta)\E_{\origin}\left|\ln{p_{\origin}^n \over p_\theta^n} - 
nf_n(\theta)\right| d\theta \\
&\leq \int_{R_n}\!\!\! \phi_n(\theta)\left(\Var_{\origin} \ln{p_{\origin}^n \over p_\theta^n}\right)^{{1 \over 2}} d\theta.
}
We now use assumption (f) to control the variance term by taking a second degree Taylor series 
expansion about $\origin$. If $\theta \in R_n$, then 
\eq{
\Var_{\origin} \left(\ln {p^n_{\origin} \over p^n_\theta}\right) 
&= n\norm{\theta - \origin}^2_{\bi_n} + n\norm{\theta - \origin}_{h_n(\theta)}^2 \\
&\leq n\norm{\theta - \origin}_{\bi_n}\left(1 +  \spec{\bi_n^{-1}} \spec{h_n(\theta)}^2\right) \\
&\leq K (1 + \spec{\bi_n^{-1}}\spec{h_n(\theta)}^2)
}
where the second derivative of the variance at $\origin$ is twice the information matrix with remainder $h_n:\Theta \to \R$ satisfying.
$\lim_{\theta \to \origin} \sup_n \spec{h_n(\theta)} = 0$ by Lemma \ref{A:lem:equicontinuity}. In the 
last two steps we used Lemma \ref{lem:spec}
and the definition of $R_n$.
Therefore
\eqn{
\nonumber 
\lim_{n\to\infty} \E_{\origin}|r_n| 
&\leq 
\lim_{n\to\infty}\int_{R_n} \!\!\!\phi_n(\theta) \left(\Var_{\origin} \ln{ p^n_{\origin} \over p^n_\theta} \right)^{1 \over 2} d\theta \\
\label{eq:Var-bound} & \leq \lim_{n\to\infty} \int_{R_n} \!\!\! \phi_n(\theta) \left(K(1 + \spec{\bi_n^{-1}}\spec{h_n(\theta)}^2)\right)^{1 \over 2} d \theta \\ 
&= \sqrt{K} 
}
where the final inequality follows since $R_n$ contracts to a point and because $\spec{\bi_n^{-1}}$ is uniformly bounded.
Therefore $\lim_{n\to\infty} \E_{\origin} |r_n| = 0$. We cannot send $K$ to infinity like in the proof of Theorem \ref{thm:main}.
Instead we simply fix $K = 2d$ and insert Equations (\ref{eq:c_n-bound}) and (\ref{eq:Var-bound}) into (\ref{eq:no-expect-bound}) to
obtain for all $\omega = \omega_1 \omega_2 \cdots \in \Omega^\infty$
\eq{
\limsup_{n\to\infty} &\left( \ln {p_{\origin}^n(\omega_{1:n}) \over m^n(\omega_{1:n})} - {d \over 2} \ln {n \over 2\pi} - {1 \over 2} \ln \det \bi_n - r_n(\omega)\right) 
\leq \ln{1 \over w(\origin)} + \ln 2 + \sqrt{2d}
}
as required. 
\end{proof}

\section{Proof of Theorem \ref{thm:main-weak}}\label{sec:proof-weak}

In the previous section we used the assumption (e) that $\spec{\bi_n^{-1}}$ was uniformly bounded by a constant
to show that the critical region
$R_n$ contracts to $\origin$ as $n$ tends to infinity. This result can be guaranteed by defining the region $R_n$ with
respect to a different norm.
Define a positive definite matrix $A_n \defined \bi_n + \epsilon I$ with $I \in \R^{d\times d}$ the identity matrix and $\epsilon > 0$ to be chosen later.
The matrix $A_n$ is positive definite since it is the sum of two positive definite matrices. Furthermore $A_n > \bi_n$.
The assumption may now be eliminated by using the norm $|\theta - \origin|_{A_n}$ to define the critical region
rather than $|\theta - \origin|_{\bi_n}$ . The only component that requires checking is the
bound on $c_n$.
The critical region becomes
\eq{
R_n \defined \set{n \norm{\theta - \origin}_{A_n}^2 \leq K}.
}
We now show that $R_n$ contracts to a point as 
$n$ tends to infinity. It is an easy consequence of the spectral decomposition theorem that the
smallest eigenvalue of $A_n$ satisfies $\lambda_{\min} \geq \epsilon$
and $\spec{A_n^{-1}} = {1 \over \lambda_{\min}} \leq {1 \over \epsilon}$, which is precisely the condition required for $R_n$ to contract to the
point $\origin$.
Therefore
\eq{
\limsup_{n\to\infty} \left(D_n - {d \over 2} \ln {n \over 2\pi} - {1 \over 2} \ln\det A_n \right) \leq \ln{1 \over w(\origin)}
}
The determinant of $A_n$ may asymptotically be arbitrarily larger than $\det \bi_n$, which appeared in Theorem \ref{thm:main}.
If $\lambda_i$ is the $i$th eigenvalue of $\bi_n$, then $\lambda_i + \epsilon$ is the $i$th eigenvalue of $A_n$ and an easy bound on $\det A_n$ is
\eq{
\det A_n &= \prod_{i=1}^d (\lambda_i + \epsilon) 
\sr{(a)}\leq \prod_{i=1}^d (\spec{\bi_n} + \epsilon) 
= (\spec{\bi_n} + \epsilon)^d
}
where (a) follows by the inequality $\lambda_i \leq \spec{\bi_n}$.
Therefore
\eq{
\limsup_{n\to\infty} 
\left(D_n - {d \over 2} \ln{n \over 2\pi} - {d \over 2} \ln(\spec{\bi_n} + \epsilon) \right) \leq \ln{1 \over w(\origin)}
}
as required. \hfill \qedsymbol

\section{Weakening the assumptions}\label{sec:equi}
The continuity of the prior $w$ at $\origin$ is required to ensure that $w$ assigns non-zero probability to environments
in a region of $\origin$. That $f_n$ is twice differentiable is required to define the curvature of $\E_{\origin} \ln p_\theta$. 
Condition (d) was used to show that the 1st order terms in the Taylor series of $f_n$ vanish. The assumption is standard and
very weak. For example, it holds for all finite $\Omega$, as well as exponential families in their canonical form.
A nice discussion with examples may be found in \cite[\S18]{Gru07}.
The equicontinuity condition
is necessary to ensure that the Taylor approximation of $f_n$ is uniformly accurate in $n$. Actually a counter-example when 
$\partial_{i,j} f_n$ is not equicontinuous
is not hard to construct.
Let $\Omega = \set{0,1}$, $\Theta = [0,1]$ and $w(\theta) = 1$ and $a_n \geq 0$ be a sequence of constants to be chosen later. 
Define probability mass in terms of its conditionals by
\eq{
p_\theta(1|\omega_{<n}) 
\defined \min\set{1,\theta + a_n \left(\theta - {1 \over 2}\right)^2},
}
which depends on $n$, but not $\omega_{<n}$.
For $\origin = {1 \over 2}$ we have $p_{\origin}(1|\omega_{<n}) = {1 \over 2}$ for all $n$, but 
if $a_n$ is chosen to converge to infinity, then $\lim_{n\to\infty} p_\theta(1|\omega_{<n}) = 1$ for all $\theta \neq \origin$. 
Based on this, if $a_n$ is chosen to converge to infinity sufficiently fast, 
then width of the interval in which $p_\theta^n$ is close to $p_{\origin}^n$ can be made arbitrarily small (Fig. \ref{fig:plot}) and so
\eq{
\lim_{n\to\infty} \left(D_n - {1 \over 2} \ln {n \over 2\pi}\right) = \infty. 
}

\begin{figure}
\centering
\includegraphics{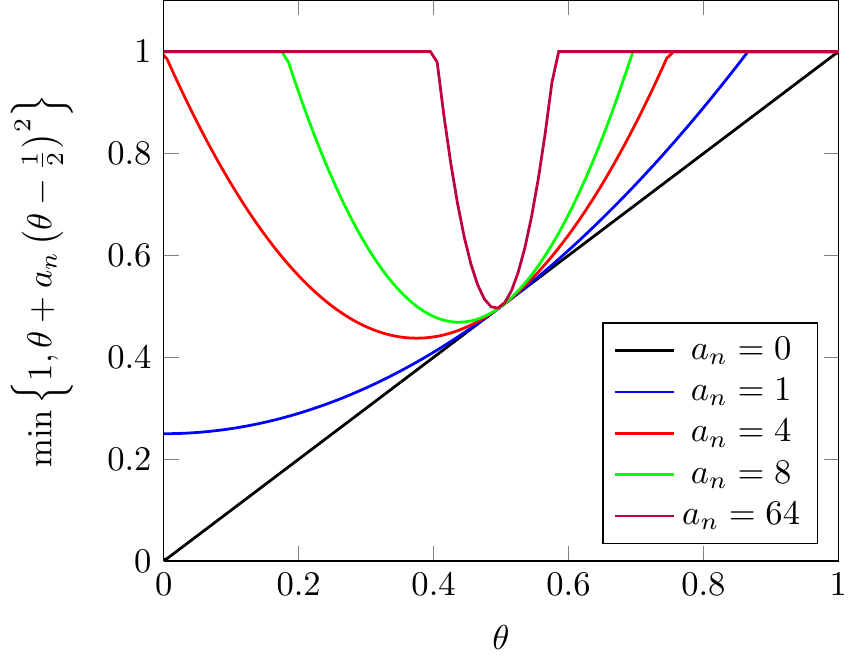}
\caption{Conditional probability mass $p_\theta(1|\omega_{<n})$ for different $a_n$}
\label{fig:plot}
\end{figure}
The information matrix at $\origin = {1 \over 2}$ can be found via a straight-forward application of the definition and 
is $\bi_n(\origin) = 1/4$, which is positive (definite) and independent of $n$. 
Therefore all conditions of the theorem are met except for equicontinuity and yet the result does not hold.
This should not be surprising. The equicontinuity condition ensures that the approximation of $f_n$ via its second
order Taylor series has uniform error, which is crucial to the proof. 

\section{Higher Order Derivatives}\label{sec:higher-order}

In the main theorems we used the second order approximation of the log likelihood function to control the
redundancy, but if $\bi_n = 0$, then Theorem \ref{thm:main} cannot be applied and Theorem \ref{thm:main-weak} seems 
suboptimal since, regardless of how $\epsilon$ is chosen, the bound on $D_n$ still increases like ${d \over 2} \log n$.
Here we consider the case where the second (and maybe higher) derivatives 
vanish. 
For the remainder of this section we fix an even $k$ and assume that $\partial_{ i_1 \cdots  i_j}f_n(\origin) = 0$ for all $j < k$.
Then by the $k$th order Taylor expansion we let
\eq{
f_n(\theta) \approx {1 \over k!} \sum_{ i_1 \cdots  i_k} \partial_{ i_1 \cdots  i_k} f_n(\origin) (\theta - \origin)^ i 
}
where for $x \in \R^d$ and $ i =  i_1 \cdots  i_k$ we use the multi-index notation
$x^ i \equiv \prod_{i=1}^k x_{ i_i}$.
Then define 
\eq{
\Lambda_n \defined \max_{\substack{x \in \R^d \\ \norm{x}_k = 1}} 
\sum_{ i_1 \cdots  i_k} \partial_{ i_1\cdots  i_k} f_n(\origin)x^ i 
}
where $\Lambda_n \in (0, \infty)$ because the derivative is positive definite and the set $\set{x \in \R^d : \norm{x}_k = 1}$ is compact.
Then
\eq{
\sum_{ i_1\cdots i_k} \partial_{ i_1\cdots i_k} f_n(\theta_o)(\theta - \origin)^ i
\leq \Lambda_n \sum_{i=1}^d (\theta_i - {\origin}_i)^k.
}
We replace the critical region used in the proof of Theorem \ref{thm:main} with
\eq{
R_n \defined \set{\theta : n \Lambda_n \sum_{i=1}^d (\theta_i - {\origin}_i)^k < K}. 
}
The proof goes through unchanged except for the computation of $c_n$.
We define a measure on $\R^d$ by
\eq{
G_n(\theta) \defined {1 \over \eta_{n}} \exp\left(-{n\Lambda_n \over k!} \sum_{i=1}^d (\theta_i - {\origin}_i)^k\right)
}
where the $\eta_{n}$ is the normalisation constant that can be determined by the usual methods for computing Gaussian integrals.
\eq{
\eta_{n} 
&\defined \int_{\R^d} \exp\left(-{n\Lambda_n \over k!}\sum_{i=1}^d (\theta_i - {\origin}_i)^k \right) d\theta  
=\left(\int_{-\infty}^\infty \exp\left(-{n \Lambda_n \over k!} x^{k}\right) dx \right)^d 
=\left({2 \over k}\left({k! \over n \Lambda_n}\right)^{1/k} \Gamma\left({1 \over k}\right)\right)^d.
}
where the last step follows by substituting $y = n\Lambda_n x^k/k!$ and the definition of the Gamma function.
Then
\eq{
c_n 
&= \int_{R_n} \exp\left(-n f_n(\theta)\right) d\theta 
\approx \eta_n \int_{R_n} G_n(\theta) d\theta
}
where the approximation is due to the lower order terms when substituting the Taylor series expansion. 
The integral may be bounded naively by computing the $k$th moments of $G$ in each direction, Markov's inequality and the union bound.
We omit the details. This approach leads to the
bound
\eq{
\int_{R_n} G(\theta) d\theta \geq 1 - {d^2(k-1)! \over K} \sr{K \to \infty}\longrightarrow 1,
}
which importantly is independent of $n$. Like in the proof of Theorem \ref{thm:main} we can send $K$ to infinity in the final stage of the
proof.
\eq{
c_n \geq \eta_n \left(1 - {d^2(k - 1)! \over K}\right)
}
Finally we apply Lemma \ref{lem:key} to obtain $\limsup_{n\to\infty} \left(D_n - \ln{1 \over \eta_n}\right) \leq \ln{1 \over w(\origin)}$ and so
\eq{
&\limsup_{n\to\infty} \left(D_n - {d \over k} \ln n - {d \over k} \ln \Lambda_n\right) 
\leq \ln{1 \over w(\origin)} + {d \over k} \ln {1 \over k!} + d \ln {k \over 2\Gamma({1 \over k})}.
}
Most interesting is the dependence on $n$, which if $\Lambda_n$ is assumed to be constant in $n$
is logarithmic, regardless of $k$. Larger $k$ only decrease the multiplicative constant.
Note that for $k = 2$ we have ${d \over 2} \ln \Lambda_n = {d \over 2}\ln \spec{\bi_n} \geq {1 \over 2} \ln \det{\bi_n}$,
which featured in Theorem \ref{thm:main}. An obscure case is when the $k$th derivatives of $f_n(\theta)$ vanish for all
$k$, but where $f_n(\theta)$
is itself non-zero. This can occur, for example if $\Omega = \set{0,1}$, $\Theta = [0,1]$, $w(\theta) = 1$ and $p_\theta^n(\omega_{1:n})$ is
\iid{} with $p^1_\theta(0) = \exp(-\exp(-1/\theta))$ where $p^1_0(0) \defined 1$. In this case $D_n$ can be shown
to grow with order $\log \log n$.

\section{Zero Dimensional Families}\label{sec:zero}

All results have been proven for $d \geq 1$. Here we briefly compare to the case when 
$\M = \set{P_1, P_2, \cdots}$ is a countable family of measures. It is no longer possible, desirable or 
necessary to define the Bayes mixture in terms of densities or a continuous prior.
Instead $w:\N \to [0,1]$ is a probability mass function and the Bayes mixture is defined by
\eq{
M(A) \defined \sum_{i=1}^\infty w(i) P_i(A).
}
Then it is trivial to bound $D_n = D(P^n_k \Vert M^n) \leq \ln {1 \over w(k)}$, which is independent of $n$. 
$\M$ may be parameterised by $\Theta = \R^+$ where $P_{\theta} = P_k$ for $\theta \in [k-1,k)$. With this
parameterisation the log likelihood is constant in a region about 
$\origin = k-{1 \over 2} \in (k-1,k)$, but in this case even the bound
shown in the previous section guarantees only sub-logarithmic redundancy, when in fact it should be constant.

\section{Conclusion}\label{sec:conc}

Our main contribution is a generalisation of Theorem 2.3 in \cite{BC90} to the case where sources are permitted to be dependent and non-stationary.
Under only mild assumptions we obtain the same bound on the relative entropy between the Bayesian mixture and $P_{\theta} \in \M$ 
of order ${d \over 2} \log n$ where $d$ is the dimension of the hypothesis class $\M$. The results were applied to discriminative learning and Markov chains
for which the results in \cite{BC90} do not apply. 
If the mean Fisher information has unbounded spectral norm as $n$ tends to infinity, then
Theorem \ref{thm:main-weak} can be applied to obtain nearly the same bound as Theorem \ref{thm:main}.
We showed that if the information matrix vanishes, then higher-order approximations of the log-likelihood lead to
similar bounds on the redundancy. Our results can usefully be applied beyond discriminative learning and Markov sources
presented here and suggest 
Bayesian reinforcement learning as a natural and important example of non-\iid sources \cite{Hut05}.
Interesting future work is to generalise the other results in \cite{BC90}.

\section*{Acknowledgment}
Tor Lattimore was supported by the Australian Google Fellowship for Machine Learning.

\bibliography{all}

\appendix

\section{Technical Results}

\begin{lemma}[\cite{Fer82}]\label{lem:normal} 
Suppose $\theta \sim \normal(\origin, \Sigma)$ where $\origin \in \R^d$ and $\Sigma \in \R^{d \times d}$ is positive definite.
Then for all $\delta > 0$
\eq{
P(|\theta - \origin|_{\Sigma^{-1}}^2 \leq \delta) \geq {1 - {d \over \delta}}.
}
\end{lemma}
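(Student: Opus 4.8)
The plan is to reduce the claim to a one-line application of Markov's inequality after whitening the Gaussian. First I would observe that since $\Sigma$ is positive definite it has a symmetric positive-definite square root $\Sigma^{1/2}$ (via the spectral decomposition), and that the affine image $Z \defined \Sigma^{-1/2}(\theta - \origin)$ of the Gaussian $\theta \sim \normal(\origin, \Sigma)$ is itself Gaussian with mean $0$ and covariance $\Sigma^{-1/2}\Sigma\Sigma^{-1/2} = I$, i.e.\ $Z \sim \normal(0,I)$. The quantity controlled by the lemma then rewrites as a squared Euclidean norm, $\norm{\theta - \origin}_{\Sigma^{-1}}^2 = (\theta - \origin)^{\T} \Sigma^{-1}(\theta - \origin) = Z^{\T} Z = \sum_{i=1}^d Z_i^2$, which is a sum of $d$ independent squared standard normal variables (a $\chi^2_d$ variate).

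Next I would compute its expectation, $\E \sum_{i=1}^d Z_i^2 = \sum_{i=1}^d \E Z_i^2 = d$, since each coordinate $Z_i$ has zero mean and unit variance. Finally, since $Y \defined \norm{\theta - \origin}_{\Sigma^{-1}}^2$ is a nonnegative random variable, Markov's inequality gives $P(Y > \delta) \le \E Y / \delta = d/\delta$, and passing to the complementary event yields $P(Y \le \delta) \ge 1 - d/\delta$, which is exactly the stated bound.

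There is essentially no obstacle in this argument; the only steps requiring any care are the existence of $\Sigma^{-1/2}$ and the standard fact that affine pushforwards of Gaussians are Gaussian with the transformed mean and covariance, both elementary. One could alternatively avoid the square root by diagonalising $\Sigma = Q^{\T}\Lambda Q$ and setting $Z = \Lambda^{-1/2} Q(\theta - \origin)$, but the whitening formulation above is cleanest. Note that the bound is only non-trivial for $\delta > d$, which is precisely the regime in which it is invoked (with $\delta = K$ large) in the proofs of Theorems \ref{thm:main} and \ref{thm:no-expect}.
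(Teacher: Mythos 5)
Your proof is correct. The paper does not actually prove this lemma---it is stated as a citation to \cite{Fer82} and described in the text as ``a multivariate form of Chebyshev's inequality''---and your whitening-plus-Markov argument (reducing $\norm{\theta-\origin}_{\Sigma^{-1}}^2$ to a $\chi^2_d$ variable with mean $d$ and applying Markov's inequality) is precisely the standard derivation that name alludes to, so your write-up supplies a complete, self-contained proof of exactly the intended kind.
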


The following lemma is required to uniformly bound the approximation error of the Taylor series of $\ln {P^n_{\theta} / P^n_{\origin}}$. 
We presume similar results appear elsewhere, but include the statement and proof for completeness and because references seem hard to find.

\begin{lemma}\label{A:lem:equicontinuity}
Let $\origin \in \Theta \subset \R^d$ and $f_n:\Theta \to \R$ and $H_n(\theta) \in \R^{d\times d}$ be the Hessian
of $f_n$ at $\theta$. Suppose also that 
\begin{enumerate}[(a)]
\item $f_n$ is twice differentiable at $\origin \in \Theta$.
\item $\partial_i f_n(\origin) = 0$ and $f_n(\origin) = 0$ for all $1 \leq i \leq d$.
\item $H_n(\theta)_{i,j}$ is equicontinuous at $\origin$ for all $1 \leq i,j \leq d$.
\end{enumerate}
Then there exists a family of functions $h_n:\Theta \to \R^{d\times d}$ such that
\eq{
f_n(\theta) &= {1 \over 2} \norm{\theta - \origin}_{H_n(\origin)}^2 + \norm{\theta - \origin}_{h_n(\theta)}^2
}
where $\lim_{\delta \to 0} \sup_n \sup_{\theta \in N_\delta^2} \spec{h_n(\theta)} = 0$
and $N_\delta^2 \defined \set{\theta : \norm{\theta - \origin}_2 < \delta}$.
\end{lemma}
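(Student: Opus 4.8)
The plan is to define $h_n$ explicitly via the Taylor remainder and then estimate its spectral norm using the equicontinuity hypothesis (c). Since $f_n(\origin)=0$ and $\partial_i f_n(\origin)=0$, a second-order Taylor expansion with integral remainder gives
\eq{
f_n(\theta) = {1 \over 2}\sum_{i,j}(\theta-\origin)_i(\theta-\origin)_j \int_0^1 2(1-s)\,\partial_{i,j}f_n\bigl(\origin + s(\theta-\origin)\bigr)\,ds.
}
Writing this as ${1 \over 2}\norm{\theta-\origin}_{H_n(\origin)}^2 + \norm{\theta-\origin}_{h_n(\theta)}^2$ forces the definition
\eq{
h_n(\theta)_{i,j} \defined \int_0^1 (1-s)\Bigl(\partial_{i,j}f_n\bigl(\origin+s(\theta-\origin)\bigr) - \partial_{i,j}f_n(\origin)\Bigr)\,ds,
}
so that $h_n(\theta)$ is a symmetric $d\times d$ matrix with entrywise bound $\norm{h_n(\theta)_{i,j}} \leq \sup_{0 \leq s \leq 1}\norm{\partial_{i,j}f_n(\origin+s(\theta-\origin)) - \partial_{i,j}f_n(\origin)}$.

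Next I would convert this entrywise bound into a spectral-norm bound using the elementary inequality $\spec{A} \leq d\max_{i,j}\norm{A_{i,j}}$ for any $d\times d$ matrix $A$. Given $\epsilon > 0$, equicontinuity (c) supplies a $\delta_0 > 0$ such that $\norm{\partial_{i,j}f_n(\theta') - \partial_{i,j}f_n(\origin)} < \epsilon/d$ for all $n$ and all $\theta'$ with $\norm{\theta'-\origin}_2 < \delta_0$. If $\theta \in N_\delta^2$ with $\delta \leq \delta_0$, then every point $\origin + s(\theta-\origin)$ on the segment also lies in $N_{\delta_0}^2$, so each entry of $h_n(\theta)$ is bounded by $\epsilon/d$ uniformly in $n$, whence $\spec{h_n(\theta)} \leq \epsilon$. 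This shows $\lim_{\delta\to 0}\sup_n\sup_{\theta\in N_\delta^2}\spec{h_n(\theta)} = 0$, as required.

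The only genuinely delicate point is the justification of the integral-form Taylor remainder: the hypothesis only asserts that $f_n$ is twice differentiable \emph{at} $\origin$, not on a neighbourhood, so a priori $\partial_{i,j}f_n$ need not exist away from $\origin$. However, equicontinuity of $\partial_{i,j}f_n$ \emph{at} $\origin$ presupposes that these second partials exist on some neighbourhood of $\origin$ (otherwise the expression $\partial_{i,j}f_n(\theta)$ in (c) is undefined), so I would first note that (c) implicitly guarantees existence and (uniform) continuity of the Hessian near $\origin$; one then has enough regularity for the standard multivariate Taylor theorem with integral remainder along the segment, and the argument above goes through. An alternative route, avoiding the integral remainder, is to invoke the mean-value form of Taylor's theorem coordinatewise, but the integral form is cleanest for obtaining the explicit $h_n$ and the uniform-in-$n$ estimate simultaneously.
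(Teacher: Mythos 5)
Your proof is correct and takes essentially the same route as the paper's: both expand $f_n$ to second order about $\origin$, define $h_n(\theta)$ as the difference between the Hessian term in the Taylor remainder and ${1 \over 2}H_n(\origin)$, and pass from an entrywise bound to the spectral norm via $\spec{A} \leq d \norm{A}_{\max}$ before invoking equicontinuity. The only difference is that you use the integral form of the remainder where the paper uses the mean-value (Lagrange) form, and both versions rely on the point you flag explicitly, namely that condition (c) implicitly supplies existence of the Hessian on a neighbourhood of $\origin$ rather than only at that point.
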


\begin{proof}
For $\theta \in N_\delta^2$ a first order Taylor series of $f_n$ about $\origin$ leads to
\eq{
f_n(\theta) = (\theta - \origin)^\T S_n(\theta)(\theta - \origin)
}
where 
\eq{
S_n(\theta)_{i,j} \in \set{ {1 \over 2} H_n(\bar\theta)_{i,j} : \bar\theta \in N_\delta^2}
}
Now define
$h_n(\theta) \defined S_n(\theta) - {1 \over 2} H_n(\theta_\circ)$, which
implies that $f_n(\theta) = {1 \over 2}|\theta - \origin|_{H_n} + |\theta - \origin|_{h_n(\theta)}$ and
\eq{
\lim_{\delta \to 0} &\sup_{n} \sup_{\theta \in N_\delta^2} \spec{h_n(\theta)}
\sr{(a)}\leq d \lim_{\delta \to 0} \sup_n \sup_{\theta \in N_\delta^2} \norm{h_n(\theta)}_{\max} \\ 
&\sr{(b)}= d \lim_{\delta \to 0} \sup_n \sup_{\theta \in N_\delta^2} 
\max_{i,j} \left|S_n(\theta)_{i,j} - {1 \over 2} H_n(\origin)_{i,j}\right| \\
&\sr{(c)}\leq d \max_{i,j} \lim_{\delta\to 0} \sup_n \sup_{\theta \in N_\delta^2} \left|{1 \over 2} H_n(\theta)_{i,j} - {1 \over 2}H_n(\origin)_{i,j}\right|\\
&\sr{(d)}= 0
}
where (a) is follows from the bound $\spec{\cdot} \leq d \norm{\cdot}_{\max}$ for matrices of dimension $d$.
(b) by the definition of $h_n(\theta)$.
(c) by the definition of $S_n(\theta)$ and exchanging the max with the limit, which 
is valid because the indices $i$ and $j$ range over finite set.
(d) by equicontinuity of the Hessian at $\theta_o$.
\end{proof}

\begin{lemma}\label{lem:spec}
If $A \in \R^{d\times d}$ is positive definite and $x \in \R^d$, then $\norm{x}_2^2 \leq \norm{x}_A^2 \spec{A^{-1}}$.
If $B \in \R^{d\times d}$ and $x \in \R^d$, then $\norm{x}_B^2 \leq \norm{x}_2^2 \spec{B}^2$.
\end{lemma}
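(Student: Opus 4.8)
The plan is to read off both inequalities from the variational characterisation of the spectral norm together with the spectral decomposition; no analytic input is needed. The first inequality was in fact already asserted in Section \ref{sec:not}, so the lemma merely records it alongside the companion bound.

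For the inequality with positive definite $A$: since $A$ is positive definite it has a positive definite square root $A^{1/2}$, and the substitution $y = A^{1/2}x$ turns $\norm{x}_A^2 = x^\T A x$ into $\norm{y}_2^2$ and $\norm{x}_2^2 = x^\T x$ into $y^\T A^{-1} y$. As $A^{-1}$ is positive definite, $\spec{A^{-1}}$ is its largest eigenvalue, and the Rayleigh quotient bound $y^\T M y \le \lambda_{\max}(M)\,\norm{y}_2^2$ for symmetric $M$ gives $y^\T A^{-1}y \le \spec{A^{-1}}\,\norm{y}_2^2$; rewriting in terms of $x$ yields $\norm{x}_2^2 \le \spec{A^{-1}}\,\norm{x}_A^2$. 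Equivalently, diagonalise $A = Q\Lambda Q^\T$ and combine $\norm{x}_A^2 = \sum_i \lambda_i(Q^\T x)_i^2 \ge \lambda_{\min}(A)\,\norm{x}_2^2$ with $\spec{A^{-1}} = \lambda_{\min}(A)^{-1}$.

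For the second inequality, $B$ is not assumed symmetric or positive semidefinite, so $\norm{x}_B^2 = x^\T B x$ may be negative; in that case the claim is immediate since its right-hand side is nonnegative. Otherwise, the Cauchy--Schwarz inequality together with the definition of the spectral norm gives $x^\T B x = \langle x, Bx\rangle \le \norm{x}_2\,\norm{Bx}_2 \le \spec{B}\,\norm{x}_2^2$, which is exactly the estimate invoked in Sections \ref{sec:proof}--\ref{sec:proof-weak} (there $B = h_n(\theta)$, whose spectral norm tends to $0$ as $\theta \to \origin$, so the precise exponent on $\spec{B}$ plays no role downstream).

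There is no real obstacle here: both parts are one-line consequences of linear algebra. The only point worth flagging is that the second part must be phrased via $|x^\T B x|$ rather than as a genuine norm inequality, because $\norm{\cdot}_B$ --- although written suggestively --- is not a norm when $B$ fails to be positive definite, a convention already noted in Section \ref{sec:not}.
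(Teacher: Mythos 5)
Your argument for the first inequality is correct and essentially the paper's: the paper diagonalises $A = U^{\T}DU$ and bounds $x^{\T}U^{\T}Ux \leq x^{\T}U^{\T}DUx/\lambda_{\min}$, then identifies $\spec{A^{-1}} = 1/\lambda_{\min}$, which is exactly your second (Rayleigh-quotient) formulation; your square-root substitution $y = A^{1/2}x$ is a harmless variant of the same computation.

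The second inequality is where your proposal genuinely diverges, and you are on the right side of the divergence. Via Cauchy--Schwarz you prove $x^{\T}Bx \leq \spec{B}\,\norm{x}_2^2$, i.e.\ the bound with $\spec{B}$ to the \emph{first} power, whereas the lemma asserts $\norm{x}_B^2 \leq \norm{x}_2^2\,\spec{B}^2$. These are not equivalent: for $\spec{B} < 1$ the stated squared version is \emph{stronger} than what you prove, and it is in fact false --- take $B = cI$ with $0 < c < 1$, for which $x^{\T}Bx = c\,\norm{x}_2^2 > c^2\,\norm{x}_2^2$. The paper's own one-line proof of this part rests on the identity $\max_{y}\,\norm{y}_B^2/\norm{y}_2^2 \equiv \spec{B}^2$, which is incorrect (for symmetric $B$ that maximum is $\lambda_{\max}(B)$, i.e.\ $\spec{B}$ for positive semidefinite $B$, not its square). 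So your proof establishes the corrected statement, and your parenthetical observation that the exponent is immaterial downstream is the right diagnosis: the lemma is applied to $B = h_n(\theta)$ with $\spec{h_n(\theta)} \to 0$, and both $\spec{h_n}$ and $\spec{h_n}^2$ tend to zero, so the main theorems survive with the exponent corrected (the displays in Sections \ref{sec:proof} and \ref{sec:proof-weak} that carry a $\spec{h_n(\theta)}^2$ factor should simply read $\spec{h_n(\theta)}$). Your handling of the possibly negative $x^{\T}Bx$ case is also a point the paper glosses over. In short: no gap in your argument; the gap is in the lemma as stated, and you have effectively repaired it.
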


\begin{proof}
Let $\lambda_{\min}$ be the smallest eigenvalue of $A$ and $A = U^{\T} D U$ be the spectral decomposition of $A$ where the diagonal of $D$ consists of the eigenvalues of $A$ and $U^{-1} = U^{\T}$.
Then $\norm{x}_2^2 = x^{\T} U^{\T} Ux \leq x^{\T}U^{\T} D U x / \lambda_{\min} = \norm{x}_A^2 / \lambda_{\min}$. 
Complete the first part by checking that $\spec{A^{-1}} = 1/\lambda_{\min}$.
The second part is proven by
noting that
\eq{
{\norm{x}_B^2 \over \norm{x}_2^2} \leq \max_{y:\norm{y}_2^2 = \norm{x}_2^2} {\norm{y}_B^2 \over \norm{y}_2^2} \equiv \spec{B}^2.
}
Rearrange to complete the result.
\end{proof}

\begin{lemma}\label{lem:mat}
Let $\sum_{j=0}^{d} \theta_j = 1$.
Then the matrix $A \in \R^{d\times d}$ defined by
\eq{
A_{j,k} = {\ind{j=k} \over \theta_j} + {1 \over \theta_{0}}
}
has determinant $\det A = \prod_{j=0}^{d} {1 \over \theta_j}$.
\end{lemma}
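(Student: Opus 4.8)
The plan is to recognise $A$ as a rank-one update of a diagonal matrix and apply the matrix determinant lemma. Write $\mathbf 1 \defined (1,\dots,1)^\T \in \R^d$ and $D \defined \diag(1/\theta_1,\dots,1/\theta_d)$, noting that all $\theta_j \neq 0$ is implicit (otherwise the claimed formula is undefined) and that $\theta_0 = 1 - \sum_{j=1}^d \theta_j$ while the indices $j,k$ of $A$ run over $1,\dots,d$ only. The defining relation $A_{j,k} = \ind{j=k}/\theta_j + 1/\theta_0$ is then precisely
\[
A = D + \frac{1}{\theta_0}\,\mathbf 1\,\mathbf 1^\T .
\]

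First I would record that $\det D = \prod_{j=1}^d 1/\theta_j$ and $D^{-1} = \diag(\theta_1,\dots,\theta_d)$. The matrix determinant lemma $\det(D + uv^\T) = \det D\,(1 + v^\T D^{-1} u)$, applied with $u = \theta_0^{-1}\mathbf 1$ and $v = \mathbf 1$, gives
\[
\det A = \det D \cdot \Bigl(1 + \tfrac{1}{\theta_0}\,\mathbf 1^\T D^{-1}\mathbf 1\Bigr).
\]
Using the constraint $\sum_{j=0}^d \theta_j = 1$ we have $\mathbf 1^\T D^{-1}\mathbf 1 = \sum_{j=1}^d \theta_j = 1 - \theta_0$, so the bracketed factor collapses to $1 + (1-\theta_0)/\theta_0 = 1/\theta_0$. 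Multiplying through yields $\det A = \frac{1}{\theta_0}\prod_{j=1}^d \frac{1}{\theta_j} = \prod_{j=0}^d \frac{1}{\theta_j}$, which is the claim.

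If a self-contained argument is preferred to a citation, I would derive the rank-one update formula in place: compute the determinant of the $(d{+}1)\times(d{+}1)$ bordered matrix with leading $1\times1$ block equal to $1$, off-diagonal blocks $-\mathbf 1^\T$ and $\theta_0^{-1}\mathbf 1$, and trailing block $D$, once via the Schur complement of the $1\times1$ block (which gives $\det(D + \theta_0^{-1}\mathbf 1\mathbf 1^\T)$) and once via the Schur complement of $D$ (which gives $\det D\,(1 + \theta_0^{-1}\mathbf 1^\T D^{-1}\mathbf 1)$), then equate. Alternatively one can induct on $d$, subtracting the last row from all the others to expose the bordered-diagonal shape before expanding along that row. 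Either route is a short computation; there is no real obstacle here, the only point to watch being the bookkeeping that the index $0$ never labels a row or column of $A$ and enters only through $\theta_0$.
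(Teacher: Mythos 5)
Your proof is correct, but it takes a different route from the paper's. You write $A = D + \theta_0^{-1}\mathbf 1\mathbf 1^{\T}$ with $D = \diag(1/\theta_1,\dots,1/\theta_d)$ and invoke the matrix determinant lemma, so that $\det A = \det D\,(1 + \theta_0^{-1}\mathbf 1^{\T}D^{-1}\mathbf 1) = \bigl(\prod_{j=1}^d \theta_j^{-1}\bigr)\cdot\theta_0^{-1}$; the algebra checks out, including the use of the constraint $\sum_{j=0}^d\theta_j=1$ to collapse the scalar factor to $1/\theta_0$. The paper instead works directly on $A$: it subtracts the last row from each of the first $d-1$ rows to produce a bordered matrix with a diagonal $(d-1)\times(d-1)$ leading block $W$, and then applies the block determinant formula $\det B = \det W \det(Z - YW^{-1}X)$. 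The two arguments are close cousins — the matrix determinant lemma is itself a Schur-complement identity in disguise, as your suggested self-contained derivation via the $(d{+}1)\times(d{+}1)$ bordered matrix makes explicit — but yours avoids the row manipulation and the explicit evaluation of the $1\times 1$ Schur complement $\alpha_d + \alpha_0 + \sum_{j<d}\alpha_d\alpha_0/\alpha_j$, replacing it with the single inner product $\mathbf 1^{\T}D^{-1}\mathbf 1 = 1-\theta_0$. What your version buys is brevity and symmetry (no index is singled out as "last"); what the paper's buys is self-containedness, needing only the standard $2\times 2$ block determinant formula rather than a named lemma. Your bookkeeping remark that the index $0$ never labels a row or column of $A$ is the right thing to flag and matches the paper's convention.
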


\begin{proof}
Define $\alpha_j = 1/\theta_j$
and subtract the last row of $A$ from all previous rows to obtain a block matrix $B$ with the same determinant as $A$ of the form
\eq{
B &= 
\left(\begin{matrix}
W & X \\
Y & Z
\end{matrix}\right)  
=
\left(\begin{array}{cccc|c}
\alpha_1  & 0         & \cdots & 0            & -\alpha_d \\
0         & \alpha_2  &       &              & -\alpha_d  \\
\vdots    &           & \ddots &              & -\alpha_d\\
0         & \cdots           & 0       & \alpha_{d-1} & -\alpha_d\\ \hline
\alpha_{0}         & \alpha_{0}         & \alpha_{0}      & \alpha_{0}            & \alpha_d + \alpha_{0}
\end{array}\right)
}
Then 
\eq{
\det A &= \det W \det (Z - YW^{-1} X) \\
&= \left(\prod_{j=1}^{d-1} \alpha_j\right) \left(\alpha_d + \alpha_{0} + \sum_{j=1}^{d-1} {\alpha_d \alpha_{0} \over \alpha_j}\right)\\
&= \left(\prod_{j=0}^{d} \alpha_j\right) \left({1 \over \alpha_{0}} + {1 \over \alpha_{d}} + \sum_{j=1}^{d-1} {1 \over \alpha_j}\right)\\
&= \left(\prod_{j=0}^{d} \alpha_j\right) \left(\sum_{i=0}^{d} \theta_i\right) 
= \prod_{j=0}^{d} \alpha_j 
\equiv \prod_{j=0}^{d}{1 \over \theta_j}
}
as required.
\end{proof}

\end{document}